\newtheorem{construction}{Construction}
\newtheorem{example}{Example}
\newtheorem{lemma}{Lemma}
\newtheorem{theorem}{Theorem}
\newcommand{\gnot}{\ensuremath{\text{\small\normalfont{\textsf{NOT}}}}\xspace}
\newcommand{\cnot}{\ensuremath{\text{\small\normalfont{\textsf{CNOT}}}}\xspace}
\newcommand{\ccnot}{\ensuremath{\text{\small\normalfont{\textsf{CCNOT}}}}\xspace}
\newcommand{\cand}{\ensuremath{\text{\small\normalfont{\textsf{AND}}}}\xspace}
\newcommand{\mone}{\overline{1}}
\begin{document}

\title{Quantum Circuits for \\ Functionally Controlled NOT Gates}
\author{%
  \IEEEauthorblockN{Mathias Soeken \quad Martin Roetteler}
  \IEEEauthorblockA{Microsoft Quantum, Redmond, United States}
}

\maketitle

\begin{abstract}
  We generalize quantum circuits for the Toffoli gate presented by
  Selinger~\cite{Selinger13} and Jones~\cite{Jones13} for functionally
  controlled NOT gates, i.e., $X$ gates controlled by arbitrary $n$-variable
  Boolean functions. Our constructions target the gate set consisting of
  Clifford gates and single qubit rotations by arbitrary angles. Our
  constructions use the Walsh-Hadamard spectrum of Boolean functions and build
  on the work by Schuch and Siewert~\cite{SS03} and Welch et al~\cite{WGMA14}.
  We present quantum circuits for the case where the target qubit is in an
  arbitrary state as well as the special case where the target is in a known
  state. Additionally, we present constructions that require no auxiliary qubits
  and constructions that have a rotation depth of 1.
\end{abstract}

\section{Background and Motivation}

Finding quantum circuit implementations of subroutines that are given as
classical functions is a problem that occurs in many contexts. Examples includes
Shor's algorithm for factoring and dlogs \cite{Shor97}, Grover's search
algorithm \cite{Grover96}, quantum walk algorithms \cite{Kempe03}, the
Harrow-Hassidim-Lloyd algorithm for solving linear equations \cite{HHL09,CJS13},
and quantum simulation methods \cite{BCK15,BCC+14}. In this paper, we consider
the case where the classical functions are provided as a Boolean function, i.e.,
we are interested in finding quantum circuits that implement the unitary
\begin{equation}
  U_f:|x\rangle|y\rangle|0^\ell\rangle \mapsto
  |x\rangle|y\oplus f(x)\rangle|0^\ell\rangle,
\end{equation}
where $f(x)$ is a Boolean function over $n$ variables
$x = x_1, \dots, x_n$. 

As the target gate set we consider the universal gate set Clifford+$R_1$, which
is generated by the \cnot gate, the Hadamard gate $H =
\tfrac{1}{\sqrt{2}}\left(\begin{smallmatrix} 1 & 1 \\ 1 & -1
\end{smallmatrix}\right)$, and unitaries $R_1(\frac{k\pi}{2^j}) =
\operatorname{diag}(1, e^{\mathrm{i}k\pi2^{-j}})$ as well as their adjoints for
arbitrary nonnegative integers $j$ and $k$.  We also allow classical control
based on intermediate measurement outcomes.  The well-known Clifford+$T$ gate
library is a special case in which $j = 2$, since $T = R_1(\frac{\pi}{4})$ and
$T^\dagger = R_1^\dagger(\frac{\pi}{4}) = R_1(-\frac{\pi}{4})$.  A
\emph{rotation stage} is a set of $R_1$ gates in a circuit that can be executed
in parallel, and the \emph{rotation depth} of a circuit is the smallest number
of rotation stages in it.  Note that for some angles, an $R_1$ gate is a
Clifford gate, e.g., $R_1(\frac{\pi}{2}) = S$ and $R_1(\pi) = Z$; such gates are
ignored in the depth computation.  

Our main result are six different constructions for the implementation of $f$ ,
depending on different contexts of the target qubit as well as trade-offs
between circuit depth and number of qubits. Specifically, we distinguish the
cases in which $|y\rangle$ is an arbitrary quantum state, the case in which
$|y\rangle = |0\rangle$, and the case in which $|y\rangle = |f(x)\rangle$.  For
each of these three cases we show one construction where $\ell=0$, i.e., no
auxiliary qubits are required, but the circuits have rotation depth $O(2^n)$,
and we show one construction where $\ell = O(2^n)$, but the resulting circuits
have rotation depth 1.

Several previous works have presented constructions for $U_f$, in particular for
the special case in which $f(x_1, x_2) = x_1 \land x_2$, for which $U_f$ is also
referred to as Toffoli gate.  We refer to $U_f$ in this case as \ccnot.  For
instance, Selinger has shown a construction for \ccnot with rotation depth 1 and
$\ell = 4$~\cite{Selinger13}, depicted in Fig.~\ref{fig:ccnot}(a). Moreover,
Jones has shown that fewer rotation gates are required to implement \ccnot when
the target is in state $|0\rangle$ or $|f(x)\rangle$~\cite{Jones13}; we refer to
the operation in these cases as $\cand$ and $\cand^\dagger$, respectively.  The
quantum circuits are shown in Figs.~\ref{fig:ccnot}(c) and (d) based on the
constructions by Gidney~\cite{Gidney18}; also using the diagrammatic notation
for the \cand and $\cand^\dagger$ gates proposed in that reference.  A
measurement operation is required in the latter case.  The constructions for
these two cases can be leveraged to realize a \ccnot with an arbitrary target
state and $\ell=1$~\cite{Jones13}.  Schuch and Siewert presented a constructive
algorithm to find quantum circuits that perform arbitrary controlled phase-shift
operations $U_{\vec\theta} : |x\rangle \mapsto e^{-\mathrm{i}\theta_x}|x\rangle$
on $n$ qubits using only \cnot and $R_z$ gates, where $\vec\theta$ is a
$2^n$-element column vector of real rotation angles~\cite{SS03}. This operation
generalizes the functionally controlled $Z$ gate, for which all values in
$\vec\theta$ are either $0$ or $\pi$. Constructions for the functionally
controlled $Z$ gate can be used to implement functionally controlled \gnot gates
by surrounding the target qubit with $H$ gates.  The authors use the Hadamard
transform to translate the input vector $\vec\theta$ into rotation angles for
the $R_z$ gates in the circuit.  The construction has been rediscovered by Welch
et al.~\cite{WGMA14}, and improved by using less \cnot gates by exploiting Gray
codes.  An explicit construction for functionally controlled \gnot gates is
described in~\cite{SMSM19}.  Constructions that exploit the fact that the target
qubit is in state $|0\rangle$ or $|f(x)\rangle$ and aim at finding a good
trade-off between number of Toffoli gates and the number of auxiliary qubits
$\ell$ is presented in~\cite{BGM+19}.

\begin{figure}[t]
  \centering
  \subfloat[\ccnot with no auxiliary qubits]{\tikzpicture[scale=1.000000,x=1pt,y=1pt]
\filldraw[color=white] (0.000000, -7.000000) rectangle (146.000000, 35.000000);
\footnotesize
\draw[color=black] (0.000000,28.000000) -- (146.000000,28.000000);
\draw[color=black] (0.000000,28.000000) node[left] {$|x_1\rangle$};
\draw[color=black] (0.000000,14.000000) -- (146.000000,14.000000);
\draw[color=black] (0.000000,14.000000) node[left] {$|x_2\rangle$};
\draw[color=black] (0.000000,0.000000) -- (146.000000,0.000000);
\draw[color=black] (0.000000,0.000000) node[left] {$|y\rangle$};
\scope
\draw[fill=white] (8.000000, -0.000000) +(-45.000000:8.485281pt and 8.485281pt) -- +(45.000000:8.485281pt and 8.485281pt) -- +(135.000000:8.485281pt and 8.485281pt) -- +(225.000000:8.485281pt and 8.485281pt) -- cycle;
\clip (8.000000, -0.000000) +(-45.000000:8.485281pt and 8.485281pt) -- +(45.000000:8.485281pt and 8.485281pt) -- +(135.000000:8.485281pt and 8.485281pt) -- +(225.000000:8.485281pt and 8.485281pt) -- cycle;
\draw (8.000000, -0.000000) node {$H$};
\endscope
\scope
\draw[fill=white] (24.000000, 28.000000) +(-45.000000:8.485281pt and 8.485281pt) -- +(45.000000:8.485281pt and 8.485281pt) -- +(135.000000:8.485281pt and 8.485281pt) -- +(225.000000:8.485281pt and 8.485281pt) -- cycle;
\clip (24.000000, 28.000000) +(-45.000000:8.485281pt and 8.485281pt) -- +(45.000000:8.485281pt and 8.485281pt) -- +(135.000000:8.485281pt and 8.485281pt) -- +(225.000000:8.485281pt and 8.485281pt) -- cycle;
\draw (24.000000, 28.000000) node {{$T$}};
\endscope
\scope
\draw[fill=white] (24.000000, 14.000000) +(-45.000000:8.485281pt and 8.485281pt) -- +(45.000000:8.485281pt and 8.485281pt) -- +(135.000000:8.485281pt and 8.485281pt) -- +(225.000000:8.485281pt and 8.485281pt) -- cycle;
\clip (24.000000, 14.000000) +(-45.000000:8.485281pt and 8.485281pt) -- +(45.000000:8.485281pt and 8.485281pt) -- +(135.000000:8.485281pt and 8.485281pt) -- +(225.000000:8.485281pt and 8.485281pt) -- cycle;
\draw (24.000000, 14.000000) node {{$T$}};
\endscope
\scope
\draw[fill=white] (24.000000, -0.000000) +(-45.000000:8.485281pt and 8.485281pt) -- +(45.000000:8.485281pt and 8.485281pt) -- +(135.000000:8.485281pt and 8.485281pt) -- +(225.000000:8.485281pt and 8.485281pt) -- cycle;
\clip (24.000000, -0.000000) +(-45.000000:8.485281pt and 8.485281pt) -- +(45.000000:8.485281pt and 8.485281pt) -- +(135.000000:8.485281pt and 8.485281pt) -- +(225.000000:8.485281pt and 8.485281pt) -- cycle;
\draw (24.000000, -0.000000) node {{$T$}};
\endscope
\draw (37.000000,28.000000) -- (37.000000,0.000000);
\filldraw (37.000000, 28.000000) circle(1.500000pt);
\scope
\draw[fill=white] (37.000000, 14.000000) circle(3.000000pt);
\clip (37.000000, 14.000000) circle(3.000000pt);
\draw (34.000000, 14.000000) -- (40.000000, 14.000000);
\draw (37.000000, 11.000000) -- (37.000000, 17.000000);
\endscope
\scope
\draw[fill=white] (37.000000, 0.000000) circle(3.000000pt);
\clip (37.000000, 0.000000) circle(3.000000pt);
\draw (34.000000, 0.000000) -- (40.000000, 0.000000);
\draw (37.000000, -3.000000) -- (37.000000, 3.000000);
\endscope
\scope
\draw[fill=white] (50.000000, 14.000000) +(-45.000000:8.485281pt and 8.485281pt) -- +(45.000000:8.485281pt and 8.485281pt) -- +(135.000000:8.485281pt and 8.485281pt) -- +(225.000000:8.485281pt and 8.485281pt) -- cycle;
\clip (50.000000, 14.000000) +(-45.000000:8.485281pt and 8.485281pt) -- +(45.000000:8.485281pt and 8.485281pt) -- +(135.000000:8.485281pt and 8.485281pt) -- +(225.000000:8.485281pt and 8.485281pt) -- cycle;
\draw (50.000000, 14.000000) node {{$T^\dagger$}};
\endscope
\scope
\draw[fill=white] (50.000000, -0.000000) +(-45.000000:8.485281pt and 8.485281pt) -- +(45.000000:8.485281pt and 8.485281pt) -- +(135.000000:8.485281pt and 8.485281pt) -- +(225.000000:8.485281pt and 8.485281pt) -- cycle;
\clip (50.000000, -0.000000) +(-45.000000:8.485281pt and 8.485281pt) -- +(45.000000:8.485281pt and 8.485281pt) -- +(135.000000:8.485281pt and 8.485281pt) -- +(225.000000:8.485281pt and 8.485281pt) -- cycle;
\draw (50.000000, -0.000000) node {{$T^\dagger$}};
\endscope
\draw (63.000000,28.000000) -- (63.000000,14.000000);
\filldraw (63.000000, 28.000000) circle(1.500000pt);
\scope
\draw[fill=white] (63.000000, 14.000000) circle(3.000000pt);
\clip (63.000000, 14.000000) circle(3.000000pt);
\draw (60.000000, 14.000000) -- (66.000000, 14.000000);
\draw (63.000000, 11.000000) -- (63.000000, 17.000000);
\endscope
\draw (73.000000,14.000000) -- (73.000000,0.000000);
\filldraw (73.000000, 14.000000) circle(1.500000pt);
\scope
\draw[fill=white] (73.000000, 0.000000) circle(3.000000pt);
\clip (73.000000, 0.000000) circle(3.000000pt);
\draw (70.000000, 0.000000) -- (76.000000, 0.000000);
\draw (73.000000, -3.000000) -- (73.000000, 3.000000);
\endscope
\scope
\draw[fill=white] (86.000000, -0.000000) +(-45.000000:8.485281pt and 8.485281pt) -- +(45.000000:8.485281pt and 8.485281pt) -- +(135.000000:8.485281pt and 8.485281pt) -- +(225.000000:8.485281pt and 8.485281pt) -- cycle;
\clip (86.000000, -0.000000) +(-45.000000:8.485281pt and 8.485281pt) -- +(45.000000:8.485281pt and 8.485281pt) -- +(135.000000:8.485281pt and 8.485281pt) -- +(225.000000:8.485281pt and 8.485281pt) -- cycle;
\draw (86.000000, -0.000000) node {{$T$}};
\endscope
\draw (99.000000,28.000000) -- (99.000000,0.000000);
\filldraw (99.000000, 28.000000) circle(1.500000pt);
\scope
\draw[fill=white] (99.000000, 0.000000) circle(3.000000pt);
\clip (99.000000, 0.000000) circle(3.000000pt);
\draw (96.000000, 0.000000) -- (102.000000, 0.000000);
\draw (99.000000, -3.000000) -- (99.000000, 3.000000);
\endscope
\scope
\draw[fill=white] (112.000000, -0.000000) +(-45.000000:8.485281pt and 8.485281pt) -- +(45.000000:8.485281pt and 8.485281pt) -- +(135.000000:8.485281pt and 8.485281pt) -- +(225.000000:8.485281pt and 8.485281pt) -- cycle;
\clip (112.000000, -0.000000) +(-45.000000:8.485281pt and 8.485281pt) -- +(45.000000:8.485281pt and 8.485281pt) -- +(135.000000:8.485281pt and 8.485281pt) -- +(225.000000:8.485281pt and 8.485281pt) -- cycle;
\draw (112.000000, -0.000000) node {{$T^\dagger$}};
\endscope
\draw (125.000000,14.000000) -- (125.000000,0.000000);
\filldraw (125.000000, 14.000000) circle(1.500000pt);
\scope
\draw[fill=white] (125.000000, 0.000000) circle(3.000000pt);
\clip (125.000000, 0.000000) circle(3.000000pt);
\draw (122.000000, 0.000000) -- (128.000000, 0.000000);
\draw (125.000000, -3.000000) -- (125.000000, 3.000000);
\endscope
\scope
\draw[fill=white] (138.000000, -0.000000) +(-45.000000:8.485281pt and 8.485281pt) -- +(45.000000:8.485281pt and 8.485281pt) -- +(135.000000:8.485281pt and 8.485281pt) -- +(225.000000:8.485281pt and 8.485281pt) -- cycle;
\clip (138.000000, -0.000000) +(-45.000000:8.485281pt and 8.485281pt) -- +(45.000000:8.485281pt and 8.485281pt) -- +(135.000000:8.485281pt and 8.485281pt) -- +(225.000000:8.485281pt and 8.485281pt) -- cycle;
\draw (138.000000, -0.000000) node {$H$};
\endscope
\draw[color=black] (146.000000,28.000000) node[right] {$|x_1\rangle$};
\draw[color=black] (146.000000,14.000000) node[right] {$|x_2\rangle$};
\draw[color=black] (146.000000,0.000000) node[right] {${|y\oplus x_1x_2\rangle}$};
\endtikzpicture}

  \subfloat[\ccnot with rotation depth 1~\cite{Selinger13}]{\input{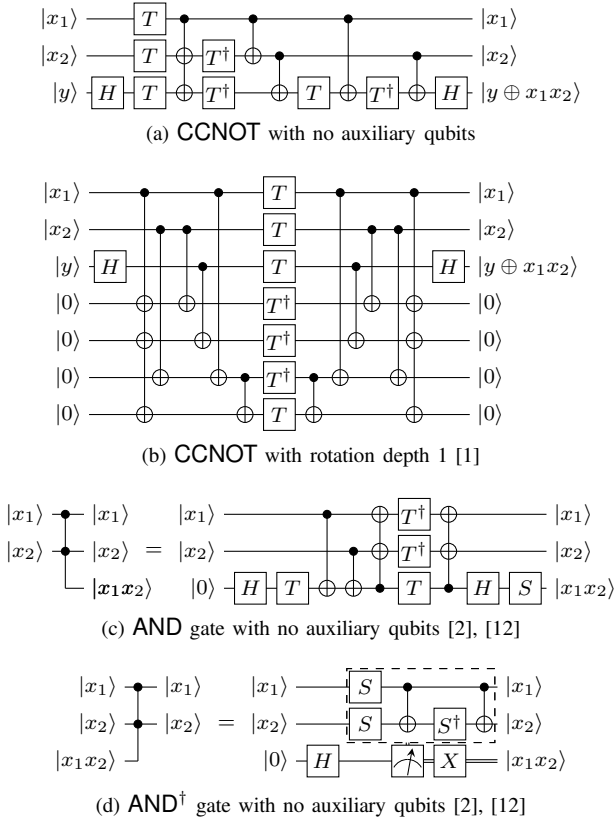}}

  \subfloat[\cand gate with no auxiliary qubits~\cite{Jones13,Gidney18}]{\tikzpicture[scale=1.000000,x=1pt,y=1pt]
\filldraw[color=white] (0.000000, -7.000000) rectangle (187.000000, 35.000000);
\footnotesize
\draw[color=black] (0.000000,28.000000) -- (19.500000,28.000000);
\draw[color=black] (57.500000,28.000000) -- (187.000000,28.000000);
\draw[color=black] (0.000000,28.000000) node[left] {$|x_1\rangle$};
\draw[color=black] (0.000000,14.000000) -- (19.500000,14.000000);
\draw[color=black] (57.500000,14.000000) -- (187.000000,14.000000);
\draw[color=black] (0.000000,14.000000) node[left] {$|x_2\rangle$};
\draw[color=black,color=white] (0.000000,0.000000) -- (5.000000,0.000000);
\draw[color=black,color=black] (5.000000,0.000000) -- (19.500000,0.000000);
\draw[color=black,color=black] (57.500000,0.000000) -- (187.000000,0.000000);
\draw[color=black] (0.000000,0.000000) node[left] {${}$};
\draw (5.000000,28.000000) -- (5.000000,0.000000);
\filldraw (5.000000, 28.000000) circle(1.500000pt);
\filldraw (5.000000, 14.000000) circle(1.500000pt);
\scope
\endscope
\scope
\endscope
\draw[color=black] (12.000000,28.000000) node[fill=white,right,minimum height=14.000000pt,minimum width=15.000000pt,inner sep=0pt] {\phantom{$|x_1\rangle$}};
\draw[color=black] (12.000000,28.000000) node[right] {$|x_1\rangle$};
\draw[color=black] (12.000000,14.000000) node[fill=white,right,minimum height=14.000000pt,minimum width=15.000000pt,inner sep=0pt] {\phantom{$|x_2\rangle$}};
\draw[color=black] (12.000000,14.000000) node[right] {$|x_2\rangle$};
\draw[color=black] (12.000000,0.000000) node[fill=white,right,minimum height=14.000000pt,minimum width=15.000000pt,inner sep=0pt] {\phantom{$|x_1x_2\rangle$}};
\draw[color=black] (12.000000,0.000000) node[right] {$|x_1x_2\rangle$};
\draw[fill=white,color=white] (31.000000, -6.000000) rectangle (46.000000, 34.000000);
\draw (38.500000, 14.000000) node {$=$};
\draw[color=black] (65.000000,28.000000) node[fill=white,left,minimum height=14.000000pt,minimum width=15.000000pt,inner sep=0pt] {\phantom{$|x_1\rangle$}};
\draw[color=black] (65.000000,28.000000) node[left] {$|x_1\rangle$};
\draw[color=black] (65.000000,14.000000) node[fill=white,left,minimum height=14.000000pt,minimum width=15.000000pt,inner sep=0pt] {\phantom{$|x_2\rangle$}};
\draw[color=black] (65.000000,14.000000) node[left] {$|x_2\rangle$};
\draw[color=black] (65.000000,0.000000) node[fill=white,left,minimum height=14.000000pt,minimum width=15.000000pt,inner sep=0pt] {\phantom{$|0\rangle$}};
\draw[color=black] (65.000000,0.000000) node[left] {$|0\rangle$};
\scope
\draw[fill=white] (75.000000, -0.000000) +(-45.000000:8.485281pt and 8.485281pt) -- +(45.000000:8.485281pt and 8.485281pt) -- +(135.000000:8.485281pt and 8.485281pt) -- +(225.000000:8.485281pt and 8.485281pt) -- cycle;
\clip (75.000000, -0.000000) +(-45.000000:8.485281pt and 8.485281pt) -- +(45.000000:8.485281pt and 8.485281pt) -- +(135.000000:8.485281pt and 8.485281pt) -- +(225.000000:8.485281pt and 8.485281pt) -- cycle;
\draw (75.000000, -0.000000) node {$H$};
\endscope
\scope
\draw[fill=white] (91.000000, -0.000000) +(-45.000000:8.485281pt and 8.485281pt) -- +(45.000000:8.485281pt and 8.485281pt) -- +(135.000000:8.485281pt and 8.485281pt) -- +(225.000000:8.485281pt and 8.485281pt) -- cycle;
\clip (91.000000, -0.000000) +(-45.000000:8.485281pt and 8.485281pt) -- +(45.000000:8.485281pt and 8.485281pt) -- +(135.000000:8.485281pt and 8.485281pt) -- +(225.000000:8.485281pt and 8.485281pt) -- cycle;
\draw (91.000000, -0.000000) node {{$T$}};
\endscope
\draw (104.000000,28.000000) -- (104.000000,0.000000);
\filldraw (104.000000, 28.000000) circle(1.500000pt);
\scope
\draw[fill=white] (104.000000, 0.000000) circle(3.000000pt);
\clip (104.000000, 0.000000) circle(3.000000pt);
\draw (101.000000, 0.000000) -- (107.000000, 0.000000);
\draw (104.000000, -3.000000) -- (104.000000, 3.000000);
\endscope
\draw (114.000000,14.000000) -- (114.000000,0.000000);
\filldraw (114.000000, 14.000000) circle(1.500000pt);
\scope
\draw[fill=white] (114.000000, 0.000000) circle(3.000000pt);
\clip (114.000000, 0.000000) circle(3.000000pt);
\draw (111.000000, 0.000000) -- (117.000000, 0.000000);
\draw (114.000000, -3.000000) -- (114.000000, 3.000000);
\endscope
\draw (124.000000,28.000000) -- (124.000000,0.000000);
\filldraw (124.000000, 0.000000) circle(1.500000pt);
\scope
\draw[fill=white] (124.000000, 28.000000) circle(3.000000pt);
\clip (124.000000, 28.000000) circle(3.000000pt);
\draw (121.000000, 28.000000) -- (127.000000, 28.000000);
\draw (124.000000, 25.000000) -- (124.000000, 31.000000);
\endscope
\scope
\draw[fill=white] (124.000000, 14.000000) circle(3.000000pt);
\clip (124.000000, 14.000000) circle(3.000000pt);
\draw (121.000000, 14.000000) -- (127.000000, 14.000000);
\draw (124.000000, 11.000000) -- (124.000000, 17.000000);
\endscope
\scope
\draw[fill=white] (137.000000, 28.000000) +(-45.000000:8.485281pt and 8.485281pt) -- +(45.000000:8.485281pt and 8.485281pt) -- +(135.000000:8.485281pt and 8.485281pt) -- +(225.000000:8.485281pt and 8.485281pt) -- cycle;
\clip (137.000000, 28.000000) +(-45.000000:8.485281pt and 8.485281pt) -- +(45.000000:8.485281pt and 8.485281pt) -- +(135.000000:8.485281pt and 8.485281pt) -- +(225.000000:8.485281pt and 8.485281pt) -- cycle;
\draw (137.000000, 28.000000) node {{$T^\dagger$}};
\endscope
\scope
\draw[fill=white] (137.000000, 14.000000) +(-45.000000:8.485281pt and 8.485281pt) -- +(45.000000:8.485281pt and 8.485281pt) -- +(135.000000:8.485281pt and 8.485281pt) -- +(225.000000:8.485281pt and 8.485281pt) -- cycle;
\clip (137.000000, 14.000000) +(-45.000000:8.485281pt and 8.485281pt) -- +(45.000000:8.485281pt and 8.485281pt) -- +(135.000000:8.485281pt and 8.485281pt) -- +(225.000000:8.485281pt and 8.485281pt) -- cycle;
\draw (137.000000, 14.000000) node {{$T^\dagger$}};
\endscope
\scope
\draw[fill=white] (137.000000, -0.000000) +(-45.000000:8.485281pt and 8.485281pt) -- +(45.000000:8.485281pt and 8.485281pt) -- +(135.000000:8.485281pt and 8.485281pt) -- +(225.000000:8.485281pt and 8.485281pt) -- cycle;
\clip (137.000000, -0.000000) +(-45.000000:8.485281pt and 8.485281pt) -- +(45.000000:8.485281pt and 8.485281pt) -- +(135.000000:8.485281pt and 8.485281pt) -- +(225.000000:8.485281pt and 8.485281pt) -- cycle;
\draw (137.000000, -0.000000) node {{$T$}};
\endscope
\draw (150.000000,28.000000) -- (150.000000,0.000000);
\filldraw (150.000000, 0.000000) circle(1.500000pt);
\scope
\draw[fill=white] (150.000000, 28.000000) circle(3.000000pt);
\clip (150.000000, 28.000000) circle(3.000000pt);
\draw (147.000000, 28.000000) -- (153.000000, 28.000000);
\draw (150.000000, 25.000000) -- (150.000000, 31.000000);
\endscope
\scope
\draw[fill=white] (150.000000, 14.000000) circle(3.000000pt);
\clip (150.000000, 14.000000) circle(3.000000pt);
\draw (147.000000, 14.000000) -- (153.000000, 14.000000);
\draw (150.000000, 11.000000) -- (150.000000, 17.000000);
\endscope
\scope
\draw[fill=white] (163.000000, -0.000000) +(-45.000000:8.485281pt and 8.485281pt) -- +(45.000000:8.485281pt and 8.485281pt) -- +(135.000000:8.485281pt and 8.485281pt) -- +(225.000000:8.485281pt and 8.485281pt) -- cycle;
\clip (163.000000, -0.000000) +(-45.000000:8.485281pt and 8.485281pt) -- +(45.000000:8.485281pt and 8.485281pt) -- +(135.000000:8.485281pt and 8.485281pt) -- +(225.000000:8.485281pt and 8.485281pt) -- cycle;
\draw (163.000000, -0.000000) node {$H$};
\endscope
\scope
\draw[fill=white] (179.000000, -0.000000) +(-45.000000:8.485281pt and 8.485281pt) -- +(45.000000:8.485281pt and 8.485281pt) -- +(135.000000:8.485281pt and 8.485281pt) -- +(225.000000:8.485281pt and 8.485281pt) -- cycle;
\clip (179.000000, -0.000000) +(-45.000000:8.485281pt and 8.485281pt) -- +(45.000000:8.485281pt and 8.485281pt) -- +(135.000000:8.485281pt and 8.485281pt) -- +(225.000000:8.485281pt and 8.485281pt) -- cycle;
\draw (179.000000, -0.000000) node {{$S$}};
\endscope
\draw[color=black] (187.000000,28.000000) node[right] {$|x_1\rangle$};
\draw[color=black] (187.000000,14.000000) node[right] {$|x_2\rangle$};
\draw[color=black] (187.000000,0.000000) node[right] {$|x_1x_2\rangle$};
{\draw[color=black,fill=white] (12.000000,0.000000) node[right] {$|x_1x_2\rangle$};}
\endtikzpicture}

  \subfloat[$\cand^\dagger$ gate with no auxiliary qubits~\cite{Jones13,Gidney18}]{\tikzpicture[scale=1.000000,x=1pt,y=1pt]
\filldraw[color=white] (0.000000, -7.000000) rectangle (141.000000, 35.000000);
\footnotesize
\draw[color=black] (0.000000,28.000000) -- (19.500000,28.000000);
\draw[color=black] (57.500000,28.000000) -- (141.000000,28.000000);
\draw[color=black] (0.000000,28.000000) node[left] {$|x_1\rangle$};
\draw[color=black] (0.000000,14.000000) -- (19.500000,14.000000);
\draw[color=black] (57.500000,14.000000) -- (141.000000,14.000000);
\draw[color=black] (0.000000,14.000000) node[left] {$|x_2\rangle$};
\draw[color=black] (0.000000,0.000000) -- (5.000000,0.000000);
\draw[color=black,color=white] (5.000000,0.000000) -- (19.500000,0.000000);
\draw[color=black,color=black] (57.500000,0.000000) -- (107.000000,0.000000);
\draw[color=black,color=black] (107.000000,-0.500000) -- (141.000000,-0.500000);
\draw[color=black,color=black] (107.000000,0.500000) -- (141.000000,0.500000);
\draw[color=black] (0.000000,0.000000) node[left] {$|x_1x_2\rangle$};
\draw (5.000000,28.000000) -- (5.000000,0.000000);
\filldraw (5.000000, 28.000000) circle(1.500000pt);
\filldraw (5.000000, 14.000000) circle(1.500000pt);
\scope
\endscope
\scope
\endscope
\draw[color=black] (12.000000,28.000000) node[fill=white,right,minimum height=14.000000pt,minimum width=15.000000pt,inner sep=0pt] {\phantom{$|x_1\rangle$}};
\draw[color=black] (12.000000,28.000000) node[right] {$|x_1\rangle$};
\draw[color=black] (12.000000,14.000000) node[fill=white,right,minimum height=14.000000pt,minimum width=15.000000pt,inner sep=0pt] {\phantom{$|x_2\rangle$}};
\draw[color=black] (12.000000,14.000000) node[right] {$|x_2\rangle$};
\draw[color=black] (12.000000,0.000000) node[fill=white,right,minimum height=14.000000pt,minimum width=15.000000pt,inner sep=0pt] {\phantom{${}$}};
\draw[color=black] (12.000000,0.000000) node[right] {${}$};
\draw[fill=white,color=white] (31.000000, -6.000000) rectangle (46.000000, 34.000000);
\draw (38.500000, 14.000000) node {$=$};
\draw[color=black] (65.000000,28.000000) node[fill=white,left,minimum height=14.000000pt,minimum width=15.000000pt,inner sep=0pt] {\phantom{$|x_1\rangle$}};
\draw[color=black] (65.000000,28.000000) node[left] {$|x_1\rangle$};
\draw[color=black] (65.000000,14.000000) node[fill=white,left,minimum height=14.000000pt,minimum width=15.000000pt,inner sep=0pt] {\phantom{$|x_2\rangle$}};
\draw[color=black] (65.000000,14.000000) node[left] {$|x_2\rangle$};
\draw[color=black] (65.000000,0.000000) node[fill=white,left,minimum height=14.000000pt,minimum width=15.000000pt,inner sep=0pt] {\phantom{$|0\rangle$}};
\draw[color=black] (65.000000,0.000000) node[left] {$|0\rangle$};
\scope
\draw[fill=white] (75.000000, -0.000000) +(-45.000000:8.485281pt and 8.485281pt) -- +(45.000000:8.485281pt and 8.485281pt) -- +(135.000000:8.485281pt and 8.485281pt) -- +(225.000000:8.485281pt and 8.485281pt) -- cycle;
\clip (75.000000, -0.000000) +(-45.000000:8.485281pt and 8.485281pt) -- +(45.000000:8.485281pt and 8.485281pt) -- +(135.000000:8.485281pt and 8.485281pt) -- +(225.000000:8.485281pt and 8.485281pt) -- cycle;
\draw (75.000000, -0.000000) node {$H$};
\endscope
\scope
\draw[fill=white] (91.000000, 28.000000) +(-45.000000:8.485281pt and 8.485281pt) -- +(45.000000:8.485281pt and 8.485281pt) -- +(135.000000:8.485281pt and 8.485281pt) -- +(225.000000:8.485281pt and 8.485281pt) -- cycle;
\clip (91.000000, 28.000000) +(-45.000000:8.485281pt and 8.485281pt) -- +(45.000000:8.485281pt and 8.485281pt) -- +(135.000000:8.485281pt and 8.485281pt) -- +(225.000000:8.485281pt and 8.485281pt) -- cycle;
\draw (91.000000, 28.000000) node {{$S$}};
\endscope
\scope
\draw[fill=white] (91.000000, 14.000000) +(-45.000000:8.485281pt and 8.485281pt) -- +(45.000000:8.485281pt and 8.485281pt) -- +(135.000000:8.485281pt and 8.485281pt) -- +(225.000000:8.485281pt and 8.485281pt) -- cycle;
\clip (91.000000, 14.000000) +(-45.000000:8.485281pt and 8.485281pt) -- +(45.000000:8.485281pt and 8.485281pt) -- +(135.000000:8.485281pt and 8.485281pt) -- +(225.000000:8.485281pt and 8.485281pt) -- cycle;
\draw (91.000000, 14.000000) node {{$S$}};
\endscope
\draw (107.000000,28.000000) -- (107.000000,14.000000);
\filldraw (107.000000, 28.000000) circle(1.500000pt);
\scope
\draw[fill=white] (107.000000, 14.000000) circle(3.000000pt);
\clip (107.000000, 14.000000) circle(3.000000pt);
\draw (104.000000, 14.000000) -- (110.000000, 14.000000);
\draw (107.000000, 11.000000) -- (107.000000, 17.000000);
\endscope
\draw[fill=white] (101.000000, -6.000000) rectangle (113.000000, 6.000000);
\draw[very thin] (107.000000, 0.600000) arc (90:150:6.000000pt);
\draw[very thin] (107.000000, 0.600000) arc (90:30:6.000000pt);
\draw[->,>=stealth] (107.000000, -5.400000) -- +(80:10.392305pt);
\scope
\draw[fill=white] (123.000000, 14.000000) +(-45.000000:8.485281pt and 8.485281pt) -- +(45.000000:8.485281pt and 8.485281pt) -- +(135.000000:8.485281pt and 8.485281pt) -- +(225.000000:8.485281pt and 8.485281pt) -- cycle;
\clip (123.000000, 14.000000) +(-45.000000:8.485281pt and 8.485281pt) -- +(45.000000:8.485281pt and 8.485281pt) -- +(135.000000:8.485281pt and 8.485281pt) -- +(225.000000:8.485281pt and 8.485281pt) -- cycle;
\draw (123.000000, 14.000000) node {{$S^\dagger$}};
\endscope
\scope
\draw[fill=white] (123.000000, -0.000000) +(-45.000000:8.485281pt and 8.485281pt) -- +(45.000000:8.485281pt and 8.485281pt) -- +(135.000000:8.485281pt and 8.485281pt) -- +(225.000000:8.485281pt and 8.485281pt) -- cycle;
\clip (123.000000, -0.000000) +(-45.000000:8.485281pt and 8.485281pt) -- +(45.000000:8.485281pt and 8.485281pt) -- +(135.000000:8.485281pt and 8.485281pt) -- +(225.000000:8.485281pt and 8.485281pt) -- cycle;
\draw (123.000000, -0.000000) node {$X$};
\endscope
\draw (136.000000,28.000000) -- (136.000000,14.000000);
\filldraw (136.000000, 28.000000) circle(1.500000pt);
\scope
\draw[fill=white] (136.000000, 14.000000) circle(3.000000pt);
\clip (136.000000, 14.000000) circle(3.000000pt);
\draw (133.000000, 14.000000) -- (139.000000, 14.000000);
\draw (136.000000, 11.000000) -- (136.000000, 17.000000);
\endscope
\draw[color=black] (141.000000,28.000000) node[right] {$|x_1\rangle$};
\draw[color=black] (141.000000,14.000000) node[right] {$|x_2\rangle$};
\draw[color=black] (141.000000,0.000000) node[right] {$|x_1x_2\rangle$};
\draw[draw opacity=1.000000,fill opacity=0.200000,color=black,dashed] (84.000000,35.000000) rectangle (140.000000,7.000000);
\draw[draw opacity=1.000000,fill opacity=0.200000,color=black,dashed] (84.000000,35.000000) rectangle (140.000000,7.000000);
{\draw (106.5, 6) -- (106.5, 7);}
{\draw (107.5, 6) -- (107.5, 7);}
\endtikzpicture}
  \caption{Clifford+$T$ implementations for the \ccnot gate.}
  \label{fig:ccnot}
\end{figure}

We follow the usual convention of using quantum circuits as
computational models to describe quantum computations~\cite{NC00},
e.g., as already used in Fig.~\ref{fig:ccnot}.  We also use a textual
description for quantum circuits.  Assuming that qubits are indexed by
a distinct set of integers, we write $U_i$ to mean that a single qubit
unitary is applied to the qubit with index $i$, we write $\cnot_{i,j}$
when a \cnot gate is applied with control qubit $i$ and target qubit
$j$, and we use `$\circ$' for sequential composition of unitaries
(note that sequential composition is read from left to right, as in
the diagrammatic notation, and not from right to left as in matrix
multiplication).  We use $C^\dagger$ to denote the reverse circuit of
$C$, i.e., the reverse sequence where each unitary is replaced by its
adjoint.  Finally, we use the notation $[C]_i$ to mean that circuit
$C$ is only performed if the measurement outcome of qubit $i$ is $1$
when measured in the $Z$ basis. Note that this formalism does not
explicitly describe which computations are performed in parallel, but
we assume that gates are parallelized in a way such that the rotation
depth is minimized.  As an example, the circuit in
Fig.~\ref{fig:ccnot}(d) can be described as
$H_2 \circ [S_0 \circ S_1 \circ \cnot_{0,1} \circ
S^\dagger_1 \circ \cnot_{0,1} \circ X_2]_2$, assuming qubit indexes
$0, 1$, and $2$ for the qubits from top to bottom.

\paragraph*{Other notation used in the paper.} For a bitstring $x = x_1\dots
x_n$, let $\mu(x) = \sum_{i=1}^nx_i$ be the sideways sum of $x$, also called
Hamming weight.  Further, let $\rho x$ be the ruler function of $x$, which is
the largest integer $k$ such that $2^k \mathbin{|} x$, where $x \neq 0$.
Furthermore, we define $\rho 0 = \infty$. For two bitstrings $x$ and $y$ of the
same length, let $x \oplus y$ be the bitwise XOR operation.  When convenient, we
write $\mone$ in place of $-1$.

\section{General Case: Arbitrary Target Quantum State}

The proposed constructions make use of Gray codes and the Hadamard transform,
which we review in the beginning of this section.

\paragraph*{Gray codes.}  Let $v_0, v_1, \dots, v_{2^n-1}$ be a cyclic binary
Gray code that traverses all $n$-bit strings, i.e., $\mu(v_k \oplus v_{(k+1)
\bmod 2^n}) = 1$ and $v_k \neq v_l$ when $k \neq l$.  In other words, the
sequence forms a Hamiltonian cycle on the $n$-dimensional hypercube.  Without
loss of generality, we further assume that $v_0 = 0\dots 0$.  With this start
value, the sequence is uniquely determined by integers $\delta_0, \delta_1,
\dots, \delta_{2^n-1}$, such that $v_{(k+1)\bmod 2^n} = v_k \oplus
2^{\delta_k}$.  Note that for the standard Gray binary code, we have $\delta_k =
\rho(k+1)$ for $0 \le k < 2^n - 1$ and $\delta_{2^n-1} = n - 1$.

\begin{example}
  For $n = 2$, a cyclic binary Gray code is
  $v_0 = 00, v_1 = 01, v_2 = 11, v_3 = 10$, or alternatively expressed
  as $\delta_0 = 0, \delta_1 = 1, \delta_2 = 0, \delta_3 = 1$.
\end{example}

\paragraph*{Hadamard transform.}  The Hadamard transform $H_n$ on $n$
variables is a $2^n \times 2^n$ integral matrix which can be
recursively defined as $H_n = H_1 \otimes H_{n-1}$, where
$H_1 = \left(\begin{smallmatrix} 1&1 \\
    1&\mone \end{smallmatrix}\right)$.  For an $n$-variable Boolean
function $f(x_1, \dots, x_n)$ we define the $2^n$-element column
vector
\begin{multline}
  \label{eq:1-1coding}
  \hat f = ((-1)^{f(0, \dots, 0, 0)}, (-1)^{f(0, \dots, 0, 1)}, \dots, \\
  (-1)^{f(1, \dots, 1, 0)}, (-1)^{f(1, \dots, 1, 1)})^T,
\end{multline}
which contains all truth values of $f$, where $0$ and $1$ are encoded as $1$ and
$-1$, respectively.  The vector $\hat f$ is also called the (Walsh-)Hadamard
spectrum of $f$~\cite{MS77}. Multiplying $H_n \hat f$ results in a column vector
$s = (s_0, \dots, s_{2^n-1})^T$ whose elements are called the spectral
coefficients of $f$.

\begin{example}
  For $f(x_1, x_2) = x_1x_2$, we have $\hat f = (1, 1, 1, -1)^T$ and
  \[
    H_2\hat f =
    \begin{pmatrix}
      1 &     1 &     1 &     1 \\
      1 & \mone &     1 & \mone \\
      1 &     1 & \mone & \mone \\
      1 & \mone & \mone &     1
    \end{pmatrix}
    \begin{pmatrix}
      1 \\ 1 \\ 1 \\ \mone
    \end{pmatrix}
    =
    \begin{pmatrix}
      2 \\ 2 \\ 2 \\ -2
    \end{pmatrix}.
  \]
\end{example}

\begin{construction}[General case, no aux.~qubits]
  \label{con:ccnot-low-qubit}
  Let $f$ be an $n$-variable Boolean function and let
  $s = (s_0, \dots, s_{2^n-1})^T$ be its spectral coefficients.
  Assuming that the control qubit for variable $x_i$ has index $i-1$
  and the target qubit has index $n$, the circuit
  \begin{equation}
    \label{eq:ccnot-low-qubit}
    H_n \circ S_n \circ
    \mathop{\bigcirc}\limits_{i=0}^{n-1} C_i
    \circ C
    \circ
    H_n
  \end{equation}
  where
  \[
    C_i = \mathop{\bigcirc}\limits_{k=0}^{2^i-1}\left(
      R_1(\theta_{2^i+v_k})_i \circ \cnot_{\delta_k,i}\right)
  \]
  and
  \[
    C = \mathop{\bigcirc}\limits_{k=0}^{2^n-1}\left(
      R_1^\dagger(\theta_{v_k})_n \circ \cnot_{\delta_k,n} \right)
  \]
  implements
  $U_f:|x\rangle|y\rangle \mapsto |x\rangle|y\oplus f(x)\rangle$
  without any auxiliary qubits, where
  $\theta_j = \tfrac{s_j\pi}{2^{n+1}}$ and $\cnot_{-1,0} = I_0$ (this
  case occurs once in subcircuit $C_0$ when $k=0$.)
\end{construction}
\begin{figure*}[t]
  \centering
  \input{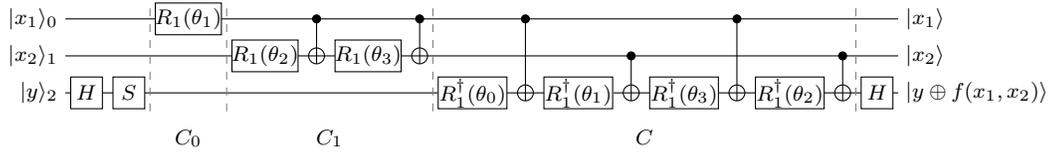}
  \caption{Example circuit for Construction~\ref{con:ccnot-low-qubit}
    where $n=2$.  The subscripts on the input qubit label indicate the
    qubit index.}
  \label{fig:ccnot-low-qubit}
\end{figure*}
For each value $1 \le k < 2^n$, with binary expansion $k = (b_nb_{n-1}\dots
b_1)_2$, we apply the gate $R_1(\theta_k)$. This phase must be applied to a
computational state corresponding to $x_1^{b_1} \oplus \cdots \oplus x_n^{b_n}$,
which can be constructed using \cnot gates~\cite{DHH+05,SS03,AMM14,AM19}.  In
order to reduce the number of required \cnot gates, the subcircuit $C_i$ applies
them using a cyclic binary Gray code~\cite{WGMA14}, for all values $k$ in which
the leading 1 is at position $b_{i+1}$, i.e., $2^i \le k < 2^{i+1}$.  Since the
Gray code is cyclic, the computational state of the qubits remains unchanged and
corresponds to the input qubits after each application of subcircuit $C_i$.  The
last subcircuit $C$ prepares all computational states corresponding to all
linear combinations of the input qubits together with the target qubit.  In this
case, the adjoint of the corresponding rotation gate is applied.  Why these
rotation angles correspond to the spectral coefficients of $f$ is explained and
proven in Appendix~\ref{sec:app-rotations}.
\begin{example}
  Fig.~\ref{fig:ccnot-low-qubit} shows the circuit from
  Construction~\ref{con:ccnot-low-qubit} when $n = 2$.
\end{example}
Note that the $S$ gate may be merged with the $R_1^\dagger(\theta_0)$
gate as $R_1^\dagger(\theta_0 + \tfrac{\pi}{2})$.  When $f = x_1x_2$
and when moving all gates as far as left as possible, one obtains the
circuit in Fig.~\ref{fig:ccnot}(a).

\begin{construction}[General case, depth 1]
  \label{con:ccnot-depth-1}
  Let $f(x_1, \dots, x_n)$ be a Boolean function and let
  $s = (s_0, s_1, \dots, s_{2^n-1})^T$ be its spectral coefficients.
  The circuit we construct has $2^{n+1} - 1$ qubits indexed from $1$
  to $2^{n+1} - 1$.  Assuming that the control qubit for variable
  $x_i$ has index $2^{i-1}$, that the target qubit has index $2^n$,
  and that all other indexes are assigned to the auxiliary qubits, the
  circuit
  \begin{equation}
    \label{eq:ccnot-depth-1}
    H_{2^n} \circ S_{2^n} \circ
    C_1 \circ C_2 \circ
    R \circ
    C_2^\dagger \circ C_1^\dagger
    \circ H_{2^n},
  \end{equation}
  where
  \[
    C_1 =
    \mathop{\bigcirc}\limits_{\substack{3 \le k < 2^{n+1} \\ \mu k \neq 1}}
      \cnot_{\rho k,k},\;
    C_2 =
    \mathop{\bigcirc}\limits_{\substack{3 \le k < 2^{n+1} \\ \mu k \neq 1}}
      \cnot_{k - \rho k,k}
  \]
  and
  \[
    R = \mathop{\bigcirc}\limits_{k=1}^{2^n-1} R_1(\theta_k)_k \circ
    \mathop{\bigcirc}\limits_{k=0}^{2^n-1} R_1^\dagger(\theta_k)_{2^n+k}
  \]
  implements
  $U_f : |x\rangle|y\rangle|0^\ell\rangle \mapsto |x\rangle|y\oplus
  f(x)\rangle|0^\ell\rangle$, where $\ell = 2^{n+1}-n-2$ and
  $\theta_k = \tfrac{s_k\pi}{2^{n+1}}$.  Note that all rotation gates
  in~$R$ can be executed in parallel, since they are all applied to
  different qubits.
\end{construction}
The auxiliary qubit with index $k = (b_{n+1}\dots b_1)_2$ is used to
prepare the computational state of the linear combination
$x_1^{b_1} \oplus \cdots \oplus x_n^{b_n} \oplus y^{b_{n+1}}$.
Subcircuit $C_1$ initializes the auxiliary qubits with $\rho k$, i.e.,
the trailing 1 in $k$ and subcircuit $C_2$ copies over the remaining
bits using $k - \rho k$.  The order guarantees that this is always
possible.
\begin{figure}[t]
  \centering
  \input{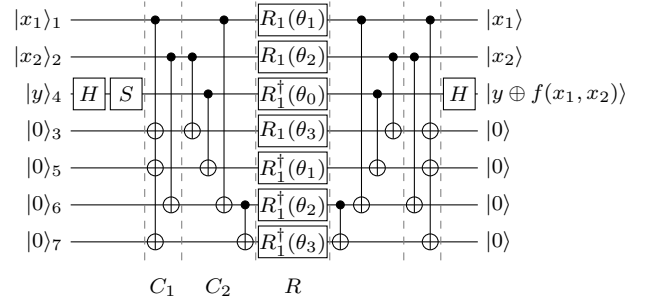}
  \caption{Example circuit for Construction~\ref{con:ccnot-depth-1}
    where $n=2$.  The subscripts on the input qubit label indicate the
    qubit index.}
  \label{fig:ccnot-depth-1}
\end{figure}
\begin{example}
  Fig.~\ref{fig:ccnot-depth-1} shows the circuit from
  Construction~\ref{con:ccnot-depth-1} when $n = 2$.
\end{example}
As in Construction~\ref{con:ccnot-low-qubit}, the $S$ gate may be
merged with the $R_1^\dagger(\theta_0)$ gate as
$R_1^\dagger(\theta_0 + \tfrac{\pi}{2})$.  When $f = x_1x_2$ and after
merging the $S$ gate, one obtains the circuit in
Fig.~\ref{fig:ccnot}(b).

\section{Special Case: Target Quantum State is $|0\rangle$ or $|f(x)\rangle$}
In this section, we show constructions for generalizations of $\cand$
and $\cand^\dagger$, i.e., unitary operations in which the target
quantum state is known to be $|0\rangle$ or $|f(x)\rangle$,
respectively.
\begin{construction}[Target $|0\rangle$, no aux.~qubits]
  \label{con:and-low-qubit}
  Let $f$ be an $n$-variable Boolean function and let
  $s = (s_0, \dots, s_{2^n-1})^T$ be its spectral coefficients.  The
  circuit

  \begin{equation}
    \label{eq:and-low-qubit}
    H_n \circ S_n \circ
    \mathop{\bigcirc}\limits_{k=0}^{2^n-1}\left(
      R_1^\dagger(\theta_{v_k})_n \circ \cnot_{\delta_k,n}
    \right)
    \circ
    H_n
  \end{equation}
  implements $U_f:|x\rangle|0\rangle \mapsto |x\rangle|f(x)\rangle$
  without any auxiliary qubits, where
  $\theta_j = \tfrac{s_j\pi}{2^{n+1}}$.  Note that this equals
  $H_n \circ S_n \circ C \circ H_n$, where $C$ is as
  in~\eqref{eq:ccnot-low-qubit} from
  Construction~\ref{con:ccnot-low-qubit}.
\end{construction}
If the target qubit is in state $|0\rangle$, we only need to apply
$R_1$ gates to computational states that involve the target qubit.
Therefore, almost half of the rotation gates can be saved, in fact all
subcircuits $C_i$, which are required in
Construction~\ref{con:ccnot-low-qubit}, can be omitted if the target
qubit is known to be in state $|0\rangle$.
\begin{figure}[t]
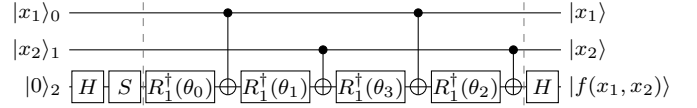

  \centering
  \tikzpicture[scale=1.000000,x=1pt,y=1pt]
\filldraw[color=white] (0.000000, -7.000000) rectangle (186.000000, 35.000000);
\footnotesize
\draw[color=black] (0.000000,28.000000) -- (186.000000,28.000000);
\draw[color=black] (0.000000,28.000000) node[left] {$|x_1\rangle_0$};
\draw[color=black] (0.000000,14.000000) -- (186.000000,14.000000);
\draw[color=black] (0.000000,14.000000) node[left] {$|x_2\rangle_1$};
\draw[color=black] (0.000000,0.000000) -- (186.000000,0.000000);
\draw[color=black] (0.000000,0.000000) node[left] {$|0\rangle_2$};
\scope
\draw[fill=white] (7.000000, -0.000000) +(-45.000000:8.485281pt and 8.485281pt) -- +(45.000000:8.485281pt and 8.485281pt) -- +(135.000000:8.485281pt and 8.485281pt) -- +(225.000000:8.485281pt and 8.485281pt) -- cycle;
\clip (7.000000, -0.000000) +(-45.000000:8.485281pt and 8.485281pt) -- +(45.000000:8.485281pt and 8.485281pt) -- +(135.000000:8.485281pt and 8.485281pt) -- +(225.000000:8.485281pt and 8.485281pt) -- cycle;
\draw (7.000000, -0.000000) node {$H$};
\endscope
\scope
\draw[fill=white] (21.000000, -0.000000) +(-45.000000:8.485281pt and 8.485281pt) -- +(45.000000:8.485281pt and 8.485281pt) -- +(135.000000:8.485281pt and 8.485281pt) -- +(225.000000:8.485281pt and 8.485281pt) -- cycle;
\clip (21.000000, -0.000000) +(-45.000000:8.485281pt and 8.485281pt) -- +(45.000000:8.485281pt and 8.485281pt) -- +(135.000000:8.485281pt and 8.485281pt) -- +(225.000000:8.485281pt and 8.485281pt) -- cycle;
\draw (21.000000, -0.000000) node {{$S$}};
\endscope
\scope[color=gray,dashed]
\draw (28.000000,-7.000000) -- (28.000000,35.000000);
\endscope
\scope
\draw[fill=white] (42.000000, -0.000000) +(-45.000000:18.384776pt and 8.485281pt) -- +(45.000000:18.384776pt and 8.485281pt) -- +(135.000000:18.384776pt and 8.485281pt) -- +(225.000000:18.384776pt and 8.485281pt) -- cycle;
\clip (42.000000, -0.000000) +(-45.000000:18.384776pt and 8.485281pt) -- +(45.000000:18.384776pt and 8.485281pt) -- +(135.000000:18.384776pt and 8.485281pt) -- +(225.000000:18.384776pt and 8.485281pt) -- cycle;
\draw (42.000000, -0.000000) node {{$R_1^\dagger(\theta_0)$}};
\endscope
\draw (60.000000,28.000000) -- (60.000000,0.000000);
\filldraw (60.000000, 28.000000) circle(1.500000pt);
\scope
\draw[fill=white] (60.000000, 0.000000) circle(3.000000pt);
\clip (60.000000, 0.000000) circle(3.000000pt);
\draw (57.000000, 0.000000) -- (63.000000, 0.000000);
\draw (60.000000, -3.000000) -- (60.000000, 3.000000);
\endscope
\scope
\draw[fill=white] (78.000000, -0.000000) +(-45.000000:18.384776pt and 8.485281pt) -- +(45.000000:18.384776pt and 8.485281pt) -- +(135.000000:18.384776pt and 8.485281pt) -- +(225.000000:18.384776pt and 8.485281pt) -- cycle;
\clip (78.000000, -0.000000) +(-45.000000:18.384776pt and 8.485281pt) -- +(45.000000:18.384776pt and 8.485281pt) -- +(135.000000:18.384776pt and 8.485281pt) -- +(225.000000:18.384776pt and 8.485281pt) -- cycle;
\draw (78.000000, -0.000000) node {{$R_1^\dagger(\theta_1)$}};
\endscope
\draw (96.000000,14.000000) -- (96.000000,0.000000);
\filldraw (96.000000, 14.000000) circle(1.500000pt);
\scope
\draw[fill=white] (96.000000, 0.000000) circle(3.000000pt);
\clip (96.000000, 0.000000) circle(3.000000pt);
\draw (93.000000, 0.000000) -- (99.000000, 0.000000);
\draw (96.000000, -3.000000) -- (96.000000, 3.000000);
\endscope
\scope
\draw[fill=white] (114.000000, -0.000000) +(-45.000000:18.384776pt and 8.485281pt) -- +(45.000000:18.384776pt and 8.485281pt) -- +(135.000000:18.384776pt and 8.485281pt) -- +(225.000000:18.384776pt and 8.485281pt) -- cycle;
\clip (114.000000, -0.000000) +(-45.000000:18.384776pt and 8.485281pt) -- +(45.000000:18.384776pt and 8.485281pt) -- +(135.000000:18.384776pt and 8.485281pt) -- +(225.000000:18.384776pt and 8.485281pt) -- cycle;
\draw (114.000000, -0.000000) node {{$R_1^\dagger(\theta_3)$}};
\endscope
\draw (132.000000,28.000000) -- (132.000000,0.000000);
\filldraw (132.000000, 28.000000) circle(1.500000pt);
\scope
\draw[fill=white] (132.000000, 0.000000) circle(3.000000pt);
\clip (132.000000, 0.000000) circle(3.000000pt);
\draw (129.000000, 0.000000) -- (135.000000, 0.000000);
\draw (132.000000, -3.000000) -- (132.000000, 3.000000);
\endscope
\scope
\draw[fill=white] (150.000000, -0.000000) +(-45.000000:18.384776pt and 8.485281pt) -- +(45.000000:18.384776pt and 8.485281pt) -- +(135.000000:18.384776pt and 8.485281pt) -- +(225.000000:18.384776pt and 8.485281pt) -- cycle;
\clip (150.000000, -0.000000) +(-45.000000:18.384776pt and 8.485281pt) -- +(45.000000:18.384776pt and 8.485281pt) -- +(135.000000:18.384776pt and 8.485281pt) -- +(225.000000:18.384776pt and 8.485281pt) -- cycle;
\draw (150.000000, -0.000000) node {{$R_1^\dagger(\theta_2)$}};
\endscope
\draw (168.000000,14.000000) -- (168.000000,0.000000);
\filldraw (168.000000, 14.000000) circle(1.500000pt);
\scope
\draw[fill=white] (168.000000, 0.000000) circle(3.000000pt);
\clip (168.000000, 0.000000) circle(3.000000pt);
\draw (165.000000, 0.000000) -- (171.000000, 0.000000);
\draw (168.000000, -3.000000) -- (168.000000, 3.000000);
\endscope
\scope[color=gray,dashed]
\draw (172.000000,-7.000000) -- (172.000000,35.000000);
\endscope
\scope
\draw[fill=white] (179.000000, -0.000000) +(-45.000000:8.485281pt and 8.485281pt) -- +(45.000000:8.485281pt and 8.485281pt) -- +(135.000000:8.485281pt and 8.485281pt) -- +(225.000000:8.485281pt and 8.485281pt) -- cycle;
\clip (179.000000, -0.000000) +(-45.000000:8.485281pt and 8.485281pt) -- +(45.000000:8.485281pt and 8.485281pt) -- +(135.000000:8.485281pt and 8.485281pt) -- +(225.000000:8.485281pt and 8.485281pt) -- cycle;
\draw (179.000000, -0.000000) node {$H$};
\endscope
\draw[color=black] (186.000000,28.000000) node[right] {$|x_1\rangle$};
\draw[color=black] (186.000000,14.000000) node[right] {$|x_2\rangle$};
\draw[color=black] (186.000000,0.000000) node[right] {${|f(x_1,x_2)\rangle}$};
\endtikzpicture
  \caption{Example circuit for Construction~\ref{con:and-low-qubit}
    where $n=2$.  The subscripts on the input qubit label indicate the
    qubit index.}
  \label{fig:and-low-qubit}
\end{figure}
\begin{example}
  Fig.~\ref{fig:and-low-qubit} shows the circuit from
  Construction~\ref{con:and-low-qubit} when $n = 2$.
\end{example}
Note that the circuit in Fig.~\ref{con:and-low-qubit} is quite
different from the circuit implementing \cand in
Fig.~\ref{fig:ccnot}(c).  This is because the circuit in
Fig.~\ref{fig:ccnot}(c) aims at reducing the rotation depth, which is
optimal in that case when no auxiliary qubits are used.  In order to
achieve the better rotation depth, more \cnot gates are required.
Similarly, a \ccnot gate with rotation depth 3 is possible, when using
more \cnot gates as in Fig.~\ref{fig:ccnot}(a)~\cite{AMMR13}.
However, requiring to distribute $2^n$ rotation gates over $n + 1$
qubits asymptotically still leads to an exponential worst case
complexity for the rotation depth.
\begin{construction}[Target $|0\rangle$, depth 1]
  Let $f$ be an $n$-variable Boolean function with spectral
  coefficients $(s_0, \dots, s_{2^n-1})^T$.  Assuming that control
  qubit $x_i$ is indexed $2^{i-1}$ and that the target qubit has index
  $0$, the circuit
  \label{con:and-depth-1}
  \begin{equation}
    \label{eq:and-depth-1}
    H_0 \circ S_0 \circ C_1 \circ C_3 \circ C_2 \circ R \circ C_2^\dagger \circ C_3^\dagger \circ C_1^\dagger \circ H_0,
  \end{equation}
  where
  \[
    C_1 =
      \mathop{\bigcirc}\limits_{\substack{3 \le k < 2^n \\ \mu k \neq 1}}
      \cnot_{\rho k,k}, \;
    C_3 = \mathop{\bigcirc}\limits_{i=0}^{n-1}\cnot_{0,2^i},
  \]
  \[
    C_2 =
      \mathop{\bigcirc}\limits_{\substack{3 \le k < 2^n \\ \mu k \neq 1}}
      \cnot_{k - \rho k,k},\;
    R = \mathop{\bigcirc}_{k=0}^{2^n-1}R_1^\dagger(\theta_k)_k
  \]
  implements
  $U_f : |x\rangle|0\rangle|0^\ell\rangle \mapsto
  |x\rangle|f(x)\rangle|0^\ell\rangle$, where $\ell = 2^n - n - 1$ and
  $\theta_j = \frac{s_j\pi}{2^{n+1}}$.
\end{construction}
One possible way to get a circuit with rotation depth 1 is to take the
circuit from Construction~\ref{con:ccnot-depth-1} and remove all $R_1$
gates that are on qubits with indexes smaller than $2^n$.  However,
this requires unnecessarily many auxiliary qubits, since only roughly
half as many rotations are needed as in
Construction~\ref{con:ccnot-depth-1}.  Subcircuits $C_1$ and $C_2$
play the same role as in Construction~\ref{con:ccnot-depth-1}, and
subcircuit $C_3$ ensures to add the target qubit to all the linear
combinations of input qubits in the computational states.
\begin{figure}[t]
  \centering
  \input{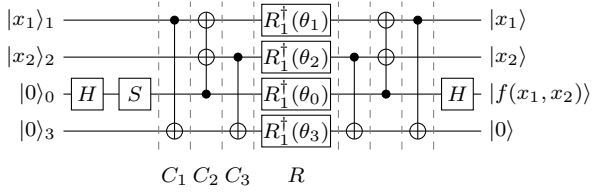}
  \caption{Example circuit for Construction~\ref{con:and-depth-1}
    where $n=2$.  The subscripts on the input qubit label indicate the
    qubit index.}
  \label{fig:and-depth-1}
\end{figure}
\begin{example}
  Fig.~\ref{fig:and-depth-1} shows the circuit from
  Construction~\ref{con:and-depth-1} when $n = 2$.
\end{example}

\begin{construction}[Target $|f(x)\rangle$, no aux.~qubits]
  Let $f$ be an $n$-variable Boolean function with spectral
  coefficients $(s_0, \dots, s_{2^n-1})^T$.  Assuming qubit index
  $i - 1$ for variable $x_i$ and qubit index $n$ for the target qubit,
  the circuit
  \label{con:anddg-low-qubit}
  \begin{equation}
    \label{eq:anddg-low-qubit}
    H_2 \circ \left[
      \mathop{\bigcirc}_{i=0}^{n-1} C_i
      \circ X_n
    \right]_n,
  \end{equation}
  where
  \[
    C_i = \mathop{\bigcirc}\limits_{k=0}^{2^i-1}\left(
      R_1(2\theta_{2^i+v_k})_i \circ \cnot_{\delta_k,i}\right).
  \]
  implements $U_f : |x\rangle|f(x)\rangle \mapsto |x\rangle|0\rangle$
  without any auxiliary qubits, where
  $\theta_j = \frac{s_j\pi}{2^{n+1}}$ and $\cnot_{-1,0} = I$ (this
  case occurs once in subcircuit $C_0$ when $k = 0$.)
\end{construction}
While Construction~\ref{con:and-low-qubit} only requires subcircuit
$C$ from Construction~\ref{con:ccnot-low-qubit}, in this case
subcircuit $C$ can be omitted, but the subcircuits $C_i$ are required
to perform a phase correction in case of a positive measurement
outcome.  Note, however, that the angle is doubled in subcircuit $C_i$
compared to Construction~\ref{con:ccnot-low-qubit}.  A larger rotation
angle typically affects the resource costs of an error-corrected
rotation gate positively in a fault-tolerant quantum computer.
\begin{figure}[t]
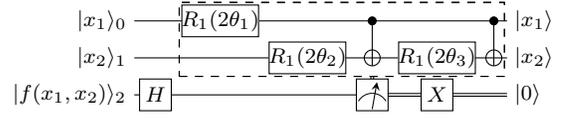

  \centering
  \tikzpicture[scale=1.000000,x=1pt,y=1pt]
\filldraw[color=white] (0.000000, -7.000000) rectangle (141.000000, 35.000000);
\footnotesize
\draw[color=black] (0.000000,28.000000) -- (141.000000,28.000000);
\draw[color=black] (0.000000,28.000000) node[left] {$|x_1\rangle_0$};
\draw[color=black] (0.000000,14.000000) -- (141.000000,14.000000);
\draw[color=black] (0.000000,14.000000) node[left] {$|x_2\rangle_1$};
\draw[color=black] (0.000000,0.000000) -- (90.000000,0.000000);
\draw[color=black] (90.000000,-0.500000) -- (141.000000,-0.500000);
\draw[color=black] (90.000000,0.500000) -- (141.000000,0.500000);
\draw[color=black] (0.000000,0.000000) node[left] {$|f(x_1,x_2)\rangle_2$};
\scope
\draw[fill=white] (8.000000, -0.000000) +(-45.000000:8.485281pt and 8.485281pt) -- +(45.000000:8.485281pt and 8.485281pt) -- +(135.000000:8.485281pt and 8.485281pt) -- +(225.000000:8.485281pt and 8.485281pt) -- cycle;
\clip (8.000000, -0.000000) +(-45.000000:8.485281pt and 8.485281pt) -- +(45.000000:8.485281pt and 8.485281pt) -- +(135.000000:8.485281pt and 8.485281pt) -- +(225.000000:8.485281pt and 8.485281pt) -- cycle;
\draw (8.000000, -0.000000) node {$H$};
\endscope
\scope
\draw[fill=white] (32.500000, 28.000000) +(-45.000000:20.506097pt and 8.485281pt) -- +(45.000000:20.506097pt and 8.485281pt) -- +(135.000000:20.506097pt and 8.485281pt) -- +(225.000000:20.506097pt and 8.485281pt) -- cycle;
\clip (32.500000, 28.000000) +(-45.000000:20.506097pt and 8.485281pt) -- +(45.000000:20.506097pt and 8.485281pt) -- +(135.000000:20.506097pt and 8.485281pt) -- +(225.000000:20.506097pt and 8.485281pt) -- cycle;
\draw (32.500000, 28.000000) node {{$R_1(2\theta_1)$}};
\endscope
\scope
\draw[fill=white] (65.500000, 14.000000) +(-45.000000:20.506097pt and 8.485281pt) -- +(45.000000:20.506097pt and 8.485281pt) -- +(135.000000:20.506097pt and 8.485281pt) -- +(225.000000:20.506097pt and 8.485281pt) -- cycle;
\clip (65.500000, 14.000000) +(-45.000000:20.506097pt and 8.485281pt) -- +(45.000000:20.506097pt and 8.485281pt) -- +(135.000000:20.506097pt and 8.485281pt) -- +(225.000000:20.506097pt and 8.485281pt) -- cycle;
\draw (65.500000, 14.000000) node {{$R_1(2\theta_2)$}};
\endscope
\draw[fill=white] (84.000000, -6.000000) rectangle (96.000000, 6.000000);
\draw[very thin] (90.000000, 0.600000) arc (90:150:6.000000pt);
\draw[very thin] (90.000000, 0.600000) arc (90:30:6.000000pt);
\draw[->,>=stealth] (90.000000, -5.400000) -- +(80:10.392305pt);
\draw (90.000000,28.000000) -- (90.000000,14.000000);
\filldraw (90.000000, 28.000000) circle(1.500000pt);
\scope
\draw[fill=white] (90.000000, 14.000000) circle(3.000000pt);
\clip (90.000000, 14.000000) circle(3.000000pt);
\draw (87.000000, 14.000000) -- (93.000000, 14.000000);
\draw (90.000000, 11.000000) -- (90.000000, 17.000000);
\endscope
\scope
\draw[fill=white] (114.500000, 14.000000) +(-45.000000:20.506097pt and 8.485281pt) -- +(45.000000:20.506097pt and 8.485281pt) -- +(135.000000:20.506097pt and 8.485281pt) -- +(225.000000:20.506097pt and 8.485281pt) -- cycle;
\clip (114.500000, 14.000000) +(-45.000000:20.506097pt and 8.485281pt) -- +(45.000000:20.506097pt and 8.485281pt) -- +(135.000000:20.506097pt and 8.485281pt) -- +(225.000000:20.506097pt and 8.485281pt) -- cycle;
\draw (114.500000, 14.000000) node {{$R_1(2\theta_3)$}};
\endscope
\scope
\draw[fill=white] (114.500000, -0.000000) +(-45.000000:8.485281pt and 8.485281pt) -- +(45.000000:8.485281pt and 8.485281pt) -- +(135.000000:8.485281pt and 8.485281pt) -- +(225.000000:8.485281pt and 8.485281pt) -- cycle;
\clip (114.500000, -0.000000) +(-45.000000:8.485281pt and 8.485281pt) -- +(45.000000:8.485281pt and 8.485281pt) -- +(135.000000:8.485281pt and 8.485281pt) -- +(225.000000:8.485281pt and 8.485281pt) -- cycle;
\draw (114.500000, -0.000000) node {$X$};
\endscope
\draw (136.000000,28.000000) -- (136.000000,14.000000);
\filldraw (136.000000, 28.000000) circle(1.500000pt);
\scope
\draw[fill=white] (136.000000, 14.000000) circle(3.000000pt);
\clip (136.000000, 14.000000) circle(3.000000pt);
\draw (133.000000, 14.000000) -- (139.000000, 14.000000);
\draw (136.000000, 11.000000) -- (136.000000, 17.000000);
\endscope
\draw[color=black] (141.000000,28.000000) node[right] {$|x_1\rangle$};
\draw[color=black] (141.000000,14.000000) node[right] {$|x_2\rangle$};
\draw[color=black] (141.000000,0.000000) node[right] {${|0\rangle}$};
\draw[draw opacity=1.000000,fill opacity=0.200000,color=black,dashed] (17.000000,35.000000) rectangle (140.000000,7.000000);
\draw[draw opacity=1.000000,fill opacity=0.200000,color=black,dashed] (17.000000,35.000000) rectangle (140.000000,7.000000);
{\draw (89.5, 6) -- (89.5, 7);}
{\draw (90.5, 6) -- (90.5, 7);}
\endtikzpicture
  \caption{Example circuit for Construction~\ref{con:anddg-low-qubit}
    where $n=2$.  The subscripts on the input qubit label indicate the
    qubit index.}
  \label{fig:anddg-low-qubit}
\end{figure}
\begin{example}
  Fig.~\ref{fig:anddg-low-qubit} shows the circuit from
  Construction~\ref{con:anddg-low-qubit} when $n = 2$.
\end{example}
When $f = x_1x_2$ and when moving all gates as far to the left as
possible, one obtains the circuit in Fig.~\ref{fig:ccnot}(d).

\begin{construction}[Target $|f(x)\rangle$, depth 1]
  Let $f$ be an $n$-variable Boolean function with spectral
  coefficients $(s_0, \dots, s_{2^n-1})^T$.  Assuming that control
  qubit $x_i$ is indexed $2^i$ and that the target qubit has index
  $0$, the circuit
  \label{con:anddg-depth-1}
  \begin{equation}
    \label{eq:anddg-depth-1}
    H_0 \circ
    \left[
      C_1 \circ C_2 \circ \mathop{\bigcirc}_{k=1}^{2^n-1}R_1(2\theta_k)_k \circ C_2^\dagger \circ C_1 \circ
      X_0
    \right],
  \end{equation}
  where
  \[
    C_1 =
    \mathop{\bigcirc}\limits_{\substack{3 \le k < 2^n \\ \mu k \neq 1}}
    \cnot_{\rho k,k}, \;
    C_2 =
    \mathop{\bigcirc}\limits_{\substack{3 \le k < 2^n \\ \mu k \neq 1}}
    \cnot_{k - \rho k,k}.
  \]
  implements
  $U_f : |x\rangle|f(x)\rangle|0^\ell\rangle \mapsto
  |x\rangle|0\rangle|0^\ell\rangle$, where $\ell = 2^n - n - 1$ and
  $\theta_j = \frac{s_j\pi}{2^{n+1}}$.
\end{construction}
The subcircuits $C_1$ and $C_2$ play the same role as in the general
case in Construction~\ref{con:ccnot-depth-1}, however, only
computational states of linear combinations involving the control
lines need to be prepared in order to apply the conditional phase
correction in case of a positive measurement result.  As in
Construction~\ref{con:anddg-low-qubit} the rotation angles are
doubled.
\begin{figure}[t]
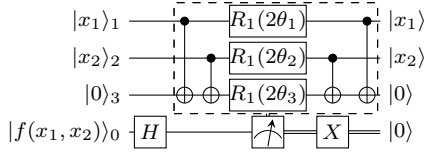

  \centering
  \tikzpicture[scale=1.000000,x=1pt,y=1pt]
\filldraw[color=white] (0.000000, -7.000000) rectangle (95.000000, 49.000000);
\footnotesize
\draw[color=black] (0.000000,42.000000) -- (95.000000,42.000000);
\draw[color=black] (0.000000,42.000000) node[left] {$|x_1\rangle_1$};
\draw[color=black] (0.000000,28.000000) -- (95.000000,28.000000);
\draw[color=black] (0.000000,28.000000) node[left] {$|x_2\rangle_2$};
\draw[color=black] (0.000000,14.000000) -- (95.000000,14.000000);
\draw[color=black] (0.000000,14.000000) node[left] {$|0\rangle_3$};
\draw[color=black] (0.000000,0.000000) -- (52.500000,0.000000);
\draw[color=black] (52.500000,-0.500000) -- (95.000000,-0.500000);
\draw[color=black] (52.500000,0.500000) -- (95.000000,0.500000);
\draw[color=black] (0.000000,0.000000) node[left] {$|f(x_1,x_2)\rangle_0$};
\scope
\draw[fill=white] (8.000000, -0.000000) +(-45.000000:8.485281pt and 8.485281pt) -- +(45.000000:8.485281pt and 8.485281pt) -- +(135.000000:8.485281pt and 8.485281pt) -- +(225.000000:8.485281pt and 8.485281pt) -- cycle;
\clip (8.000000, -0.000000) +(-45.000000:8.485281pt and 8.485281pt) -- +(45.000000:8.485281pt and 8.485281pt) -- +(135.000000:8.485281pt and 8.485281pt) -- +(225.000000:8.485281pt and 8.485281pt) -- cycle;
\draw (8.000000, -0.000000) node {$H$};
\endscope
\draw (21.000000,42.000000) -- (21.000000,14.000000);
\filldraw (21.000000, 42.000000) circle(1.500000pt);
\scope
\draw[fill=white] (21.000000, 14.000000) circle(3.000000pt);
\clip (21.000000, 14.000000) circle(3.000000pt);
\draw (18.000000, 14.000000) -- (24.000000, 14.000000);
\draw (21.000000, 11.000000) -- (21.000000, 17.000000);
\endscope
\draw (31.000000,28.000000) -- (31.000000,14.000000);
\filldraw (31.000000, 28.000000) circle(1.500000pt);
\scope
\draw[fill=white] (31.000000, 14.000000) circle(3.000000pt);
\clip (31.000000, 14.000000) circle(3.000000pt);
\draw (28.000000, 14.000000) -- (34.000000, 14.000000);
\draw (31.000000, 11.000000) -- (31.000000, 17.000000);
\endscope
\scope
\draw[fill=white] (52.500000, 42.000000) +(-45.000000:20.506097pt and 8.485281pt) -- +(45.000000:20.506097pt and 8.485281pt) -- +(135.000000:20.506097pt and 8.485281pt) -- +(225.000000:20.506097pt and 8.485281pt) -- cycle;
\clip (52.500000, 42.000000) +(-45.000000:20.506097pt and 8.485281pt) -- +(45.000000:20.506097pt and 8.485281pt) -- +(135.000000:20.506097pt and 8.485281pt) -- +(225.000000:20.506097pt and 8.485281pt) -- cycle;
\draw (52.500000, 42.000000) node {{$R_1(2\theta_1)$}};
\endscope
\scope
\draw[fill=white] (52.500000, 28.000000) +(-45.000000:20.506097pt and 8.485281pt) -- +(45.000000:20.506097pt and 8.485281pt) -- +(135.000000:20.506097pt and 8.485281pt) -- +(225.000000:20.506097pt and 8.485281pt) -- cycle;
\clip (52.500000, 28.000000) +(-45.000000:20.506097pt and 8.485281pt) -- +(45.000000:20.506097pt and 8.485281pt) -- +(135.000000:20.506097pt and 8.485281pt) -- +(225.000000:20.506097pt and 8.485281pt) -- cycle;
\draw (52.500000, 28.000000) node {{$R_1(2\theta_2)$}};
\endscope
\scope
\draw[fill=white] (52.500000, 14.000000) +(-45.000000:20.506097pt and 8.485281pt) -- +(45.000000:20.506097pt and 8.485281pt) -- +(135.000000:20.506097pt and 8.485281pt) -- +(225.000000:20.506097pt and 8.485281pt) -- cycle;
\clip (52.500000, 14.000000) +(-45.000000:20.506097pt and 8.485281pt) -- +(45.000000:20.506097pt and 8.485281pt) -- +(135.000000:20.506097pt and 8.485281pt) -- +(225.000000:20.506097pt and 8.485281pt) -- cycle;
\draw (52.500000, 14.000000) node {{$R_1(2\theta_3)$}};
\endscope
\draw[fill=white] (46.500000, -6.000000) rectangle (58.500000, 6.000000);
\draw[very thin] (52.500000, 0.600000) arc (90:150:6.000000pt);
\draw[very thin] (52.500000, 0.600000) arc (90:30:6.000000pt);
\draw[->,>=stealth] (52.500000, -5.400000) -- +(80:10.392305pt);
\draw (77.000000,28.000000) -- (77.000000,14.000000);
\filldraw (77.000000, 28.000000) circle(1.500000pt);
\scope
\draw[fill=white] (77.000000, 14.000000) circle(3.000000pt);
\clip (77.000000, 14.000000) circle(3.000000pt);
\draw (74.000000, 14.000000) -- (80.000000, 14.000000);
\draw (77.000000, 11.000000) -- (77.000000, 17.000000);
\endscope
\scope
\draw[fill=white] (77.000000, -0.000000) +(-45.000000:8.485281pt and 8.485281pt) -- +(45.000000:8.485281pt and 8.485281pt) -- +(135.000000:8.485281pt and 8.485281pt) -- +(225.000000:8.485281pt and 8.485281pt) -- cycle;
\clip (77.000000, -0.000000) +(-45.000000:8.485281pt and 8.485281pt) -- +(45.000000:8.485281pt and 8.485281pt) -- +(135.000000:8.485281pt and 8.485281pt) -- +(225.000000:8.485281pt and 8.485281pt) -- cycle;
\draw (77.000000, -0.000000) node {$X$};
\endscope
\draw (90.000000,42.000000) -- (90.000000,14.000000);
\filldraw (90.000000, 42.000000) circle(1.500000pt);
\scope
\draw[fill=white] (90.000000, 14.000000) circle(3.000000pt);
\clip (90.000000, 14.000000) circle(3.000000pt);
\draw (87.000000, 14.000000) -- (93.000000, 14.000000);
\draw (90.000000, 11.000000) -- (90.000000, 17.000000);
\endscope
\draw[color=black] (95.000000,42.000000) node[right] {$|x_1\rangle$};
\draw[color=black] (95.000000,28.000000) node[right] {$|x_2\rangle$};
\draw[color=black] (95.000000,14.000000) node[right] {$|0\rangle$};
\draw[color=black] (95.000000,0.000000) node[right] {${|0\rangle}$};
\draw[draw opacity=1.000000,fill opacity=0.200000,color=black,dashed] (17.000000,49.000000) rectangle (94.000000,7.000000);
\draw[draw opacity=1.000000,fill opacity=0.200000,color=black,dashed] (17.000000,49.000000) rectangle (94.000000,7.000000);
{\draw (52, 6) -- (52, 7);}
{\draw (53, 6) -- (53, 7);}
\endtikzpicture
  \caption{Example circuit for Construction~\ref{con:anddg-depth-1}
    where $n=2$.  The subscripts on the input qubit label indicate the
    qubit index.}
  \label{fig:anddg-depth-1}
\end{figure}
\begin{example}
  Fig.~\ref{fig:anddg-depth-1} shows the circuit from
  Construction~\ref{con:anddg-depth-1} when $n = 2$.
\end{example}

\section{Summary and Conclusions}
We have described several constructions to implement a \gnot gate controlled by
some Boolean function $f(x)$, without auxiliary qubits or with rotation depth 1,
both for the general case in which the target qubit is in an arbitrary state and
for the special cases in which the target qubit is $|0\rangle$ or
$|f(x)\rangle$.  Q\#~\cite{SGT+18} implementations are available for
Constructions~\ref{con:ccnot-low-qubit},~\ref{con:and-low-qubit},
and~\ref{con:anddg-low-qubit},\footnote{see github.com/microsoft/quantum, sample
\emph{oracle-synthesis}} as well as for Constructions~\ref{con:and-depth-1}
and~\ref{con:anddg-depth-1} when $f$ is the $n$-ary AND function.\footnote{see
github.com/microsoft/quantumlibraries, library functions \texttt{ApplyAnd} and
\texttt{ApplyLowDepthAnd}}

The presented constructions assume worst case complexity, i.e., all
spectral coefficients are nonzero.  Spectral coefficients that are 0
correspond to identity gates in the constructions which may enable
further gate cancellation or savings in auxiliary qubits.  Similarly,
some spectral coefficients lead to rotation gates that are Clifford
gates, and therefore do not contribute to the rotation depth.  It
would be interesting to investigate dedicated constructions that take
such spectral coefficients into account.  Further, an analysis of the
distribution of spectral coefficients over all Boolean functions can
help to better estimate the average complexity of the constructions.

The constructions have exponential worst case complexity either in the
rotation depth, when no auxiliary qubits are allowed, or in the number
of qubits, when the rotation depth is~1.  Pebbling strategies~(see,
e.g.,~\cite{Bennett89,Kralovic01,MSR+19}) can be used to find
tradeoffs in the constructions, e.g., one might fix the number of
auxiliary qubits and then minimize for the rotation depth within this
limit.  Further, quantum circuits might need to obey some layout
constraints due to the qubit coupling in the targeted physical quantum
computer.  Instead of applying additional algorithms as a
post-process, the layout information can be an additional parameter to
the construction algorithm.

\appendix
\subsection{Relationship between spectral coefficients and angles}
\label{sec:app-rotations}
In this section, we prove the relationship between the spectral
coefficients of the control function and the rotation angles that were
used in the constructions above.  Let $f(x_1, \dots, x_n)$ be an
$n$-variable Boolean function, and let
$U_f : |x\rangle|x_{n+1}\rangle \mapsto |x\rangle|x_{n+1} \oplus
f(x)\rangle$, with $x = x_1, \dots, x_n$.  We choose $x_{n+1}$ in
place of $y$ for a more convenient notation in the proof.  Now let
$g(x_1, \dots, x_{n+1}) = x_{n+1} \land f(x_1, \dots, x_n)$ be an
$(n+1)$-variable function, and let
$\hat g = (g_0, \dots, g_{2^{n+1}})^T$ be the function values in
$\{1, -1\}$-coding as in~\eqref{eq:1-1coding}.  We use the following
Lemma from~\cite{SMSM19}.
\begin{lemma}
  Given $f$ and $\hat g$ as above, we have
  \[
    U_f = (I_{2^n} \otimes H) \cdot D \cdot (I_{2^n} \otimes H),
  \]
  where $D = \mathrm{diag}(g_0, \dots, g_{2^{n+1}})$ and $I_{2^n}$ is
  the $2^n\times 2^n$ identity matrix.
\end{lemma}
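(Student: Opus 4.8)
The plan is to verify the identity by computing both sides as operators on the $(n+1)$-qubit space and comparing them block by block in the first $n$ tensor factors. First I would rewrite $U_f$ as a controlled single-qubit operation: since $U_f$ leaves $|x\rangle$ unchanged and maps $|x_{n+1}\rangle$ to $|x_{n+1}\oplus f(x)\rangle$, it acts on the last qubit by the Pauli $X$ exactly when $f(x)=1$, so that
\[
  U_f = \sum_{x\in\{0,1\}^n} |x\rangle\langle x| \otimes X^{f(x)}.
\]

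Next I would unpack the diagonal $D$ using the special form $g(x_1,\dots,x_{n+1}) = x_{n+1}\wedge f(x)$. Because the index of $\hat g$ is the $(n+1)$-bit string $(x, x_{n+1})$ and the entries use the $\{1,-1\}$-encoding from~\eqref{eq:1-1coding}, the diagonal entry at position $(x,x_{n+1})$ is $(-1)^{x_{n+1}f(x)}$. For fixed $x$ this equals $1$ at $x_{n+1}=0$ and $(-1)^{f(x)}$ at $x_{n+1}=1$; hence the $2\times 2$ block of $D$ acting on the last qubit is $\operatorname{diag}(1,(-1)^{f(x)}) = Z^{f(x)}$, giving the block-diagonal form
\[
  D = \sum_{x\in\{0,1\}^n} |x\rangle\langle x| \otimes Z^{f(x)}.
\]

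Finally I would conjugate by $I_{2^n}\otimes H$. Since the first $n$ qubits pass through the Hadamard layer unchanged, the conjugation acts blockwise on the last qubit only, and I would invoke the elementary single-qubit identities $HIH = I$ and $HZH = X$ (equivalently $HZ^{b}H = X^{b}$ for $b\in\{0,1\}$) to conclude
\[
  (I_{2^n}\otimes H)\,D\,(I_{2^n}\otimes H) = \sum_{x\in\{0,1\}^n} |x\rangle\langle x|\otimes X^{f(x)} = U_f.
\]
Here I also use that $H$ is self-inverse, so the symmetric sandwich is genuinely a conjugation and no adjoints need to be tracked.

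I do not anticipate a serious obstacle; the only point requiring care is matching the paper's ordering convention between the bit-string index $(x, x_{n+1})$ of the Hadamard spectrum $\hat g$ and the tensor-factor ordering implied by $I_{2^n}\otimes H$, so that the last qubit is indeed the one carrying $x_{n+1}$ and also the one on which $H$ acts. Once that bookkeeping is fixed, the whole claim reduces to applying the single-qubit identity $HZH=X$ independently in each of the $2^n$ diagonal blocks.
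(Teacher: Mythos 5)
Your proof is correct, but note that the paper does not actually prove this lemma at all: it imports it verbatim from \cite{SMSM19}, so there is no in-paper argument to compare against. Your route---writing $U_f=\sum_{x}|x\rangle\langle x|\otimes X^{f(x)}$, observing that $D=\sum_{x}|x\rangle\langle x|\otimes Z^{f(x)}$ because $g(x,x_{n+1})=x_{n+1}\land f(x)$ makes the diagonal entry at position $(x,x_{n+1})$ equal to $(-1)^{x_{n+1}f(x)}$, and then conjugating blockwise via $HZ^{b}H=X^{b}$---is the standard elementary argument, and it has the virtue of making the lemma self-contained where the paper leaves it as a citation. The bookkeeping caveat you raise resolves affirmatively: in the coding of \eqref{eq:1-1coding} the last argument varies fastest, so $x_{n+1}$ is the least significant bit of the index into $\hat g$, which corresponds exactly to the last tensor factor, i.e., the one on which $H$ acts in $I_{2^n}\otimes H$; hence $D$ is genuinely block diagonal with $2\times 2$ blocks $\operatorname{diag}(1,(-1)^{f(x)})$ indexed by $x$, as you use. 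Be aware, though, that the paper itself is looser on this point a few lines later: the proof of the subsequent theorem writes $\hat g=(1,\dots,1,f_0,\dots,f_{2^n-1})^T$, which tacitly places $x_{n+1}$ in the \emph{most} significant position---a harmless reordering for the spectral computation there, but inconsistent with the convention your proof (correctly) needs for the lemma. Finally, $g_{2^{n+1}}$ in the statement is an off-by-one slip for $g_{2^{n+1}-1}$, which your argument silently and correctly repairs.
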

Using the phase polynomial representation, we can express the action
of $D$ as
\[
  D : |x_1, \dots, x_{n+1}\rangle \mapsto \prod_{k=1}^{2^{n+1}}e^{\mathrm{i}\theta_kp_k(x_1, \dots, x_{n+1})}|x_1, \dots, x_{n_1}\rangle,
\]
where $\theta_k = \frac{\pi s'_k}{2^{n+1}}$ and
$p_k = x_1^{b_1} \oplus \cdots \oplus x_{n+1}^{b_{n+1}}$ when
$k = (b_{n+1}\dots b_1)_2$~\cite{SS03,WGMA14,AM19}.  Here
$(s'_0, \dots, s'_{2^{n+1}-1})^T$ are the spectral coefficients of
$g$.
\begin{theorem}
  Let $s = (s_0, \dots, s_{2^n-1})^T$ be the spectral coefficients of
  $f$.  Then
  \[
    s'_k = 2^n[k \bmod 2^n = 0] + (-1)^{[k \ge 2^n]}s_{k \bmod 2^n},
  \]
  for $0 \le k < 2^{n+1}$, where $[\cdot]$ is the Iverson bracket.
\end{theorem}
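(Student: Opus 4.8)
The plan is to exploit the simple structure of $g(x_1,\dots,x_{n+1}) = x_{n+1}\land f(x_1,\dots,x_n)$: it vanishes whenever $x_{n+1}=0$ and equals $f$ whenever $x_{n+1}=1$. First I would split the $\{1,\mone\}$-encoded truth table $\hat g$ into two halves of length $2^n$ according to the value of the fresh variable $x_{n+1}$. The half with $x_{n+1}=0$ is the all-ones vector $\mathbf 1$ (since $g\equiv 0$ there), while the half with $x_{n+1}=1$ is exactly $\hat f$. The only subtlety is the bookkeeping: I place $x_{n+1}$ in the leading position of the spectral index, which is precisely the convention $k=(b_{n+1}\dots b_1)_2$ used for the phase polynomial, so that the predicate $[k\ge 2^n]$ tests whether the linear form $p_k$ involves $x_{n+1}$ and $k\bmod 2^n$ records the dependence on $x_1,\dots,x_n$.

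Next I would apply the factorization $H_{n+1}=H_1\otimes H_n$ with the $H_1$ factor acting on the $x_{n+1}$ bit. In block form this reads $s'=H_{n+1}\hat g$, whose top block equals $H_n\mathbf 1 + H_n\hat f$ and whose bottom block equals $H_n\mathbf 1 - H_n\hat f$. By definition $H_n\hat f = s$, so the two blocks are $H_n\mathbf 1 \pm s$ with the sign governed by $(-1)^{[k\ge 2^n]}$. The remaining ingredient is $H_n\mathbf 1$: the all-ones vector is the Hadamard encoding of the constant-$0$ function, whose spectrum concentrates all its mass at index $0$, so by orthogonality of the rows of $H_n$ one has $(H_n\mathbf 1)_{k'} = \sum_{a=0}^{2^n-1}(-1)^{\langle k',a\rangle} = 2^n[k'=0]$. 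Substituting gives $s'_k = 2^n[k\bmod 2^n=0] + (-1)^{[k\ge 2^n]} s_{k\bmod 2^n}$, which is the claim.

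I expect the only real obstacle to be the indexing convention rather than the algebra: equation~\eqref{eq:1-1coding} writes $x_1$ as the most significant bit of the truth table, whereas the statement needs the ANDed variable $x_{n+1}$ in the most significant position of the spectral index, so the $H_1$ factor must be attached to the correct bit. Since $H_{n+1}$ depends only on the bitwise inner product $\langle k,a\rangle$, it is invariant under permuting the bits of $k$ and $a$ by the same permutation, so viewing $x_{n+1}$ as the leading bit is merely a consistent relabeling of both indices. To keep this entirely unambiguous I would, if desired, bypass the tensor-ordering discussion and argue entrywise: write $s'_k = \sum_a (-1)^{\langle k,a\rangle}\hat g_a$ and split the inner product on the top bit as $\langle k,a\rangle = b_{n+1}x_{n+1} + \langle k\bmod 2^n,\, a\bmod 2^n\rangle$; summing over $x_{n+1}\in\{0,1\}$ separates the contribution of the $x_{n+1}=0$ half (an orthogonality sum yielding $2^n[k\bmod 2^n=0]$) from that of the $x_{n+1}=1$ half (the term $(-1)^{[k\ge 2^n]}s_{k\bmod 2^n}$), reproducing the two summands directly.
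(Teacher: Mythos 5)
Your proof is correct and takes essentially the same route as the paper's: you split $\hat g$ into the all-ones half and $\hat f$, apply the block form $H_{n+1} = \left(\begin{smallmatrix} H_n & H_n \\ H_n & -H_n \end{smallmatrix}\right)$, and evaluate $H_n \vec 1 = (2^n, 0, \dots, 0)^T$ via the row sums (orthogonality) of $H_n$, exactly as in the paper. Your explicit handling of the bit-ordering convention --- placing $x_{n+1}$ in the most significant position consistently with $k = (b_{n+1}\dots b_1)_2$, which the paper's proof passes over silently --- and the optional entrywise rederivation are sound refinements of the same argument rather than a different approach.
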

\begin{proof}
  Let $\hat f = (f_0, \dots, f_{2^n-1})^T$ as in~\eqref{eq:1-1coding}.
  Since $g = x_{n+1} \land f$, note that
  $\hat g = (1, \dots, 1, f_0, \dots, f_{2^n-1})^T$.  Since
  $s' = H_{n+1}\hat g$ and
  $H_{n+1} = \left(\begin{smallmatrix} H_n & H_n \\ H_n &
      -H_n \end{smallmatrix}\right)$, the upper $2^n$ entries of $s'$
  are $H_n \vec 1 + H_n\hat f$ and the lower $2^n$ entries of $s'$ are
  $H_n \vec 1 - H_n\hat f$, where $\vec 1$ is a $2^n$ column vector in
  which all entries are $1$.  The row sums of $H_n$ are $0$ for all
  but the first row, where the row sum is $2^n$.  Therefore,
  $H_n \vec 1 = (2^n, 0, \dots, 0)^T$.
\end{proof}
The lower entries of $s'$ being negative explains the adjoint
operation of the rotation gates in the constructions.  Further, the
offset of $2^n$ in $s'_{2^n} = 2^ - s_0$ explains the additional $S$
gate in the constructions.

\subsubsection*{Acknowledgements}
We thank Nathan Wiebe, Vadym Kliuchnikov, and Thomas H\"aner for discussions and 
valuable feedback.

\bibliographystyle{IEEEtran}
\bibliography{library}

\begin{thebibliography}{10}
\providecommand{\url}[1]{#1}
\csname url@samestyle\endcsname
\providecommand{\newblock}{\relax}
\providecommand{\bibinfo}[2]{#2}
\providecommand{\BIBentrySTDinterwordspacing}{\spaceskip=0pt\relax}
\providecommand{\BIBentryALTinterwordstretchfactor}{4}
\providecommand{\BIBentryALTinterwordspacing}{\spaceskip=\fontdimen2\font plus
\BIBentryALTinterwordstretchfactor\fontdimen3\font minus
  \fontdimen4\font\relax}
\providecommand{\BIBforeignlanguage}[2]{{%
\expandafter\ifx\csname l@#1\endcsname\relax
\typeout{** WARNING: IEEEtran.bst: No hyphenation pattern has been}%
\typeout{** loaded for the language `#1'. Using the pattern for}%
\typeout{** the default language instead.}%
\else
\language=\csname l@#1\endcsname
\fi
#2}}
\providecommand{\BIBdecl}{\relax}
\BIBdecl

\bibitem{Selinger13}
P.~Selinger, ``Quantum circuits of {$T$}-depth one,'' \emph{Physical Review A},
  vol.~87, p. 042302, 2013.

\bibitem{Jones13}
N.~C. Jones, ``Low-overhead constructions for the fault-tolerant {Toffoli}
  gate,'' \emph{Physical Review A}, vol.~87, no.~2, p. 022328, 2013.

\bibitem{SS03}
N.~Schuch and J.~Siewert, ``Programmable networks for quantum algorithms,''
  \emph{Physical Review Letters}, vol.~91, no. 027902, 2003.

\bibitem{WGMA14}
J.~Welch, D.~Greenbaum, S.~Mostame, and A.~Aspuru-Guzik, ``Efficient quantum
  circuits for diagonal unitaries without ancillas,'' \emph{New Journal of
  Physics}, vol.~16, no. 033040, pp. 1--15, 2014.

\bibitem{Shor97}
P.~W. Shor, ``Polynomial-time algorithms for prime factorization and discrete
  logarithms on a quantum computer,'' \emph{{SIAM} Journal on Computing},
  vol.~26, no.~5, pp. 1484--1509, 1997.

\bibitem{Grover96}
L.~K. Grover, ``A fast quantum mechanical algorithm for database search,'' in
  \emph{Symposium on Theory and Computing}, 1996, pp. 212--219.

\bibitem{Kempe03}
J.~Kempe, ``Quantum random walks - an introductory overview,''
  \emph{Contemporary Physics}, vol.~44, no.~4, pp. 307--327, 2003.

\bibitem{HHL09}
A.~W. Harrow, A.~Hassidim, and S.~Lloyd, ``Quantum algorithm for linear systems
  of equations,'' \emph{Physical Review Letters}, vol. 103, no.~15, p. 150502,
  2009.

\bibitem{CJS13}
B.~D. Clader, B.~C. Jacobs, and C.~R. Sprouse, ``Preconditioned quantum linear
  system algorithm,'' \emph{Physical Review Letters}, vol. 110, no.~25, p.
  250504, 2013.

\bibitem{BCK15}
D.~W. Berry, A.~M. Childs, and R.~Kothari, ``Hamiltonian simulation with nearly
  optimal dependence on all parameters,'' in \emph{Foundations of Computer
  Science}, 2015, pp. 792--809.

\bibitem{BCC+14}
D.~W. Berry, A.~M. Childs, R.~Cleve, R.~Kothari, and R.~D. Somma, ``Exponential
  improvement in precision for simulating sparse {Hamiltonians},'' in
  \emph{Symposium on Theory and Computing}, 2014, pp. 283--292.

\bibitem{Gidney18}
C.~Gidney, ``Halving the cost of quantum addition,'' \emph{Quantum}, vol.~2,
  p.~74, 2018.

\bibitem{SMSM19}
M.~Soeken, F.~Mozafari, B.~Schmitt, and G.~De~Micheli, ``Compiling permutations
  for superconducting {QPUs},'' in \emph{Design, Automation and Test in
  Europe}, 2019, pp. 1349--1354.

\bibitem{BGM+19}
D.~W. Berry, C.~Gidney, M.~Motta, J.~R. McClean, and R.~Babbush, ``Qubitization
  of arbitrary basis quantum chemistry leveraging sparsity and low rank
  factorization,'' \emph{Quantum}, vol.~3, p. 208, 2019.

\bibitem{NC00}
M.~A. Nielsen and I.~L. Chuang, \emph{Quantum Computation and Quantum
  Information}.\hskip 1em plus 0.5em minus 0.4em\relax Cambridge University
  Press, 2000.

\bibitem{MS77}
F.~J. MacWilliams and N.~J.~A. Sloane, \emph{The Theory of Error-Correcting
  Codes}.\hskip 1em plus 0.5em minus 0.4em\relax Amsterdam: North-Holland,
  1977.

\bibitem{DHH+05}
C.~M. Dawson, H.~L. Haselgrove, A.~P. Hines, D.~Mortimer, M.~A. Nielsen, and
  T.~J. Osborne, ``Quantum computing and polynomial equations over the finite
  field {$\mathbb{Z}_2$},'' \emph{Quantum Information and Computation}, vol.~5,
  no.~2, pp. 102--112, 2005, arXiv preprint arXiv:quant-ph/0408129.

\bibitem{AMM14}
M.~Amy, D.~Maslov, and M.~Mosca, ``Polynomial-time {$T$}-depth optimization of
  {Clifford+$T$} circuits via matroid partitioning,'' \emph{{IEEE} Trans. on
  {CAD} of Integrated Circuits and Systems}, vol.~33, no.~10, pp. 1476--1489,
  2014.

\bibitem{AM19}
M.~Amy and M.~Mosca, ``T-count optimization and {Reed}-{Muller} codes,''
  \emph{{IEEE} Trans. on Information Theory}, vol.~65, no.~8, pp. 4771--4784,
  2019, arXiv preprint arXiv:1601.07363.

\bibitem{AMMR13}
M.~Amy, D.~Maslov, M.~Mosca, and M.~Roetteler, ``A meet-in-the-middle algorithm
  for fast synthesis of depth-optimal quantum circuits,'' \emph{{IEEE} Trans.
  on {CAD} of Integrated Circuits and Systems}, vol.~32, no.~6, pp. 818--830,
  2013.

\bibitem{SGT+18}
K.~Svore, A.~Geller, M.~Troyer, J.~Azariah, C.~Granade, B.~Heim,
  V.~Kliuchnikov, M.~Mykhailova, A.~Paz, and M.~Roetteler, ``{Q\#}: Enabling
  scalable quantum computing and development with a high-level {DSL},'' in
  \emph{Real World Domain Specific Languages Workshop}, 2018, pp. 7:1--7:10.

\bibitem{Bennett89}
C.~H. Bennett, ``Time/space trade-offs for reversible computation,''
  \emph{{SIAM} Journal on Computing}, vol.~18, no.~4, pp. 766--776, 1989.

\bibitem{Kralovic01}
R.~Kr{\'{a}}lovic, ``Time and space complexity of reversible pebbling,'' in
  \emph{Conf. on Current Trends in Theory and Practice of Informatics}, 2001,
  pp. 292--303.

\bibitem{MSR+19}
G.~Meuli, M.~Soeken, M.~Roetteler, N.~Bj{\o}rner, and G.~De~Micheli,
  ``Reversible pebbling game for quantum memory management,'' in \emph{Design,
  Automation and Test in Europe}, 2019, pp. 288--291.

\end{thebibliography}

\end{document}